\newcommand*{\Shah}{\textup{ш}}
\DeclareSymbolFont{cyrletters}{OT2}{wncyr}{m}{n}
\DeclareMathSymbol{\Shah}{\mathalpha}{cyrletters}{"78}   % for Shah
\newcommand{\coloneqq}{\triangleq}
\newcommand*{\mytitle}{Estimation of Overspread Scattering Functions}
\newcommand*{\myauthor}{G\"{o}tz Pfander, Pavel Zheltov}
\pgfplotsset{compat=newest}
\colorlet{mycolor}{violet!40}
\pgfplotsset{every linear axis/.append style={
    colorbar,
    colormap/jet,
    point meta min=0,
    xtick={0, 0.333, 0.666, 1, 1.333, 1.6666, 2},%
    xticklabels={0, $\frac{1}{3}$, $\frac23$,1, $1\frac13$, $1\frac23$, 2},
    enlargelimits=false}}
\pgfplotsset{/pgfplots/filter discard warning=false}
\pgfplotsset{every colorbar/.append style={scaled ticks=false, yticklabel style={/pgf/number format/.cd, fixed}}}
\newlength\figureheight
\newlength\figurewidth
\newcommand{\mywidth}{\figurewidth}
\theoremstyle{plain}
\newtheorem{theorem}{Theorem}[section]
\newaliascnt{lemma}{theorem}
\newaliascnt{proposition}{theorem}
\newaliascnt{corollary}{theorem}
\newtheorem{corollary}[corollary]{Corollary}
\theoremstyle{definition}
\newaliascnt{definition}{theorem}
\newtheorem{definition}[definition]{Definition}
\newaliascnt{remark}{theorem}
\newtheorem{remark}[remark]{Remark}
\newcommand*{\R}{\mathbb{R}}
\newcommand*{\N}{\mathbb{N}}
\providecommand{\Z}{\mathbb{Z}}
\renewcommand*{\C}{\mathbb{C}}
\renewcommand{\S}{\mathcal{S}} % Schwartz space
\newcommand*{\FT}{\mathcal{F}}
\newcommand*{\Ball}[1]{\mathcal{B}(#1)}
\newcommand*{\CNormal}[1]{\mathcal{C}\mathcal{N}(#1)}
\renewcommand{\d}{\:\mathrm{d}}
\renewcommand{\phi}{\varphi}
\newcommand*{\eps}{\varepsilon} %{\varepsilonit}
\DeclareMathOperator{\SINC}{sinc}
\DeclareMathOperator{\supp}{supp}
\DeclareMathOperator{\diag}{diag}
\DeclareMathOperator{\cond}{cond\,}
\DeclareMathOperator{\vol}{vol\,}
\renewcommand{\Vec}{\operatorname{vec}}
\DeclareMathOperator{\Var}{var}
\newcommand*{\Expectation}{\mathbb{E}}
\newcommand*{\conj}[1]{\overline{#1}}
\newcommand*{\EXP}{\Expectation\,}
\DeclarePairedDelimiter\abs{\lvert}{\rvert}
\DeclarePairedDelimiter\norm{\lVert}{\rVert}
\DeclarePairedDelimiter\absq{\lvert}{\rvert^2}
\DeclarePairedDelimiter\braces{\lbrace}{\rbrace}
\DeclarePairedDelimiter\brackets{\lbrack}{\rbrack}
\DeclarePairedDelimiter\paren{\lparen}{\rparen}
\DeclarePairedDelimiter\ip{\langle}{\rangle}
\DeclarePairedDelimiterX\aip[1]{\lvert\delimsize\langle}{\rangle\delimsize\rvert}{#1}
\DeclarePairedDelimiterX{\Set}[2]{\lbrace}{\rbrace}{#1 \text{ such that } #2}
\DeclarePairedDelimiterX\setnew[2]{\{}{\}}{#1 \nonscript\;\delimsize|\nonscript\; #2} % NEW
\NewDocumentCommand\sinc{s o g}{% s = star, m = mandatory arg
   \IfBooleanTF{#1}%
   { \SINC\paren*{#3}} % starred
   { \IfNoValueTF{#2}%
     { \IfNoValueTF{#3}%
       { \SINC }% plain
       { \SINC(#3)} % also plain with parenthesis?
     }
     { \SINC\paren[#2]{#3}
     }
   }%
}
\NewDocumentCommand\EU{s o g}{% s = star, o = optional in square brackets, g - optional in curly braces
   \IfBooleanTF{#1}%
   { \EXP\braces*{#3}} % starred
   { \IfNoValueTF{#2}%
     { \IfNoValueTF{#3}%
       { \EXP }% plain
       { \EXP\braces*{#3}} %  plain with parentheses
     }
     { \EXP\braces[#2]{#3}
     }
   }%
}
\newcommand{\E}[1]{\EU*{#1}}
\let\oldchi\chi
\RenewDocumentCommand\chi{s o g}{% s = star, o = optional in square brackets, g - optional in curly braces
   \IfBooleanTF{#1}%
   { \oldchi\paren*{#3}} % starred
   { \IfNoValueTF{#2}%
     { \IfNoValueTF{#3}%
       { \oldchi }% plain
       { \oldchi\paren*{#3}} %  plain with parentheses
     }
     { \oldchi\paren[#2]{#3}
     }
   }%
}
\newcommand*{\epi}[1]{\:e^{2\pi i #1}}
\newcommand*{\empi}[1]{\:e^{-2\pi i #1}}
\newcommand*{\acorr}{autocorrelation\xspace}
\newcommand*{\x}{{\boldsymbol{x}}}
\newcommand*{\y}{\boldsymbol{y}}
\newcommand*{\steta}{{\boldsymbol{\eta}}}
\newcommand*{\h}{{\boldsymbol{h}}}
\renewcommand*{\L}{\Z_L}                                                                %{\{0,\dotsc, L-1\}}
\renewcommand*{\c}{\boldsymbol{c}}
\newcommand*{\OPW}{\operatorname{OPW}}
\newcommand*{\St}{St}
\newcommand*{\StOPW}{\St\!\OPW}
\newcommand*{\rect}{\Box} %{\APLbox} %{\operatorname{rect}}
\newcommand*{\KN}{\sigma}
\newcommand*{\Zak}{\mathcal{Z}}
\renewcommand{\H}{\boldsymbol{H}}
\newcommand*{\Zee}{\boldsymbol{\mathcal{Z}}} %{\tilde{G}_p}
\newcommand*{\GG}{\tensoratom{G}{}{G}{}}
\newcommand{\placeholder}{\mathord{\,\cdot\,}}
\renewcommand*\env@matrix[1][*\c@MaxMatrixCols c]{\hskip -\arraycolsep\let\@ifnextchar\new@ifnextchar\array{#1}}\makeatother
\newcommand*{\e}{\boldsymbol{e}} % temporary name for a discrete stoch proc.
\renewcommand{\phi}{\varphi}
\renewcommand*{\Zee}{\boldsymbol{Z}}
\newcommand*{\Zvec}{\vec{\Zee}}
\renewcommand*{\x}{x}   %input signal
\newcommand*{\stx}{\boldsymbol{\x}}   %input signal
\renewcommand*{\y}{\boldsymbol{y}}   %echo
\newcommand{\Bmax}{B_{\text{\tiny{max}}}}
\newcommand{\Tmax}{T_{\text{\tiny{max}}}}
\newcommand{\Rz}{\vec{R}_{\scriptscriptstyle Z}}
\newcommand{\Rzhat}{\widehat{\vec{\boldsymbol{R}}}_{\scriptscriptstyle Z}}
\newcommand{\Chat}{\widehat{\boldsymbol{C}}}
\newcommand{\Cvec}{\vec{C}}  %was D
\newcommand{\Cfull}{\vec{C}_{\textup{\tiny{full}}}}  %was D
\newcommand{\Kfull}{K_{\textup{\tiny{full}}}}  %was D
\newcommand{\Cvechat}{\widehat{\vec{\boldsymbol{C}}}}
\renewcommand{\e}{\vec{\steta}} % was {\boldsymbol{e}}
\newcommand{\stetavec}{\vec{\steta}} % was {\boldsymbol{e}}
\renewcommand{\KN}{\boldsymbol{\sigma}} %{\mathrm{\mathbf{H}}} %{\mathrm{PSD}}
\renewcommand{\t}{\tau}
\renewcommand{\GG}{\conj{G} \otimes G}
\newcommand{\GGGamma}{\conj{G}_\Gamma \otimes G_\Gamma}
\newcommand{\comment}[1]{}
\begin{document}
\title{\mytitle \\ }
\author{\myauthor
\thanks{G.~E.~Pfander and P.~Zheltov are with Jacobs University Bremen.}%
\thanks{Emails:\{g.pfander, p.zheltov\}@jacobs-university.de}%
\thanks{G.~E.~Pfander and P.~Zheltov acknowledge funding by the Germany Science Foundation (DFG) under Grant 50292 DFG PF-4, Sampling Operators.}
\thanks{\today}
}

\maketitle

\begin{abstract}
In many radar scenarios, the radar target or the medium is assumed to possess randomly varying parts. The properties of a target are described by a random process known as the spreading function. Its second order statistics under the {WSSUS} assumption are given by the scattering function.
Recent developments in operator sampling theory suggest novel channel sounding procedures that allow for the determination of the spreading function given complete statistical knowledge of the operator echo from a single sounding by a weighted pulse train.

We construct and analyze a novel estimator for the scattering function based on these findings. Our results apply whenever the scattering function is supported on a compact subset of the time-frequency plane. We do not make any restrictions either on the geometry of this support set, or on its area.
Our estimator can be seen as a generalization of an averaged periodogram estimator for the case of a non-rectangular geometry of the support set of the scattering function.
\end{abstract}

\begin{IEEEkeywords}
Scattering function, spreading function, sampling of operators, finite dimensional Gabor systems, fiducial vectors
\end{IEEEkeywords}

\section{Introduction}\label{sec:intro}
The classical scenario in a delay-{D}oppler radar system is that a test signal $x(t)$ is reflected off a target $\H$ and the echo $\y(t) = \H\, \x(t)$ is received.%
\footnote{We use boldface to denote quantities that we may assume to be random variables or stochastic processes, $\conj{A}$ for complex conjugation and $A^*$ for conjugate transpose. We denote $\chi(t)$ a characteristic function of interval $[0,1)$ and $\chi_R(t)$ a characteristic function of set $R$.}
The characteristics of the target, such as its possibly stochastic time-varying impulse response and spreading function, or its scattering function in case of a wide sense stationary with uncorrelated scattering (WSSUS) target, have to be reconstructed from the echo $\y(t)$.
Analogously, in a wireless communication setting, a sounding signal $x(t)$ is transmitted through the channel $\H$, and the received signal $\y(t)$ is used to characterize $\H$.

A frequently used general target or channel model is given by the integral operator
\begin{equation}\label{eq:channel}
\y(t)= \H\,x(t) = \iint \steta(\tau, \nu) \: M_{\nu}  T_{\tau} \,\x(t) \d\tau \d\nu,
\end{equation}
where $T_{\tau} x(t) = x(t-\tau)$ is a time-shift operator, $M_{\nu} \x(t) = \epi{\nu t}\x(t)$ is a frequency-shift operator and $\steta$ is the \emph{(stochastic) spreading function} of the target.

If the spreading function can be reconstructed from $\y(t)$, we say that $\H$ is \emph{identifiable} by $\x(t)$.

As postulated in Kailath's and Bello's seminal papers \cite{Kailath,BelloMeas} and later proven in general terms in \cite{PfWal,KozPf,Pfander1} for a deterministic operator, its \emph{spread}---that is, the area of support of the spreading function, $\mu (\supp \steta(\tau,\nu))$--- indicates whether a deterministic operator is identifiable or not. If the area is less than one, then the operator is identifiable.
This result has been obtained for stochastic operators in \cite{PfaZh02,PfaZh03}.

On the other hand, if only the support of $\steta(\tau,\nu)$ is known and its area exceeds one, then due to aliasing effects, the operator cannot be determined from the
received (stochastic) echo $\y(t)=\H\,x(t)$ independently of the choice of the sounding signal $x(t)$.

In some applications, it suffices to determine the second-order statistics of  a zero mean stochastic process $\steta(\tau,\nu)$, that is, its so called covariance function $R(\tau,\nu,\tau',\nu') \triangleq \EXP\{\steta(\tau,\nu)\, \conj{\steta(\tau',\nu')}\}$.
In  \cite{PfaZh02,PfaZh03}, it was shown  that a necessary but not sufficient condition for the identifiability of $R(\tau,\nu,\tau',\nu')$ from the output covariance
\begin{equation}
\label{eq:defn.A}
A(t,t') \triangleq \EU*{\y(t)\, \conj{\y(t')}} = \EU*{\H x(t) \, \conj{\H x(t')}}
\end{equation}
is that $R(\tau,\nu,\tau',\nu')$ is supported on a bounded set of 4-dimensional volume less than or equal to one.

In this paper, we focus on a particular class of such operators, namely on stochastic operators that satisfy the \emph{wide-sense stationarity with uncorrelated scattering} (WSSUS) assumption \cite{BelloChar,VanTrees}. In this case, an operators's stochastic spreading function $ \steta(\t,\nu)$   has zero mean for all $\t,\nu$ and must be uncorrelated in each of the variables, that is,
\begin{equation}\label{eq:def.scat}
R(\tau,\nu,\tau',\nu') = \delta(\t-\t') \: \delta(\nu - \nu') \: C(\t,\nu).
\end{equation}
The function $C(\t,\nu)\geq 0$ is called the \emph{scattering function} of the target. It represents the variances of all individual scatterers.

Reflecting the support condition on $\steta(t,\nu)$ for the identifiability of a stochastic operator $\H$,  a WSSUS  target is commonly referred to as \emph{underspread}, if the support of $C(\t,\nu)$ is  contained in a rectangle $[0,\Tmax] \times [-\Bmax/2, \Bmax/2]$ of area $\Bmax \Tmax$ less than or equal to one, and \emph{overspread} otherwise. In \cite{PfaZh02,PfaZh03} it is shown, though, that the size of the support set, whether enclosed in a rectangle or not, is not at all relevant for the identifiability of WSSUS operators.
Indeed, in the WSSUS case, $R(\tau,\nu,\tau',\nu')$ has distributional support of 4-dimensional volume 0, and $\H$ is therefore identifiable, at least in theory.
In this paper we construct a sounding signal $x(t)$ so that indeed the scattering function  $C(\t,\nu)$ can be recovered from  $A(t,t')$ whenever $C(\t,\nu)$ has bounded support.

Note that previous results addressing the identifiability of operators with large spread commonly use a stochastic input $\stx(t)$, for example, white noise \cite{Kailath}, or stipulate a low-dimensional parametric model on the scattering function \cite{Kay}.
We want to emphasize that the results presented in this paper are based on deterministic inputs, and do not assume any prior information about $C(\t,\nu)$ other than boundedness of its support.

\subsection{Estimation problem}\label{sec:estimation}
The herein addressed problem in radar is to determine $C(\t,\nu)$ from the echo $\y$ \cite{Green,Harmon,Gaarder}.
A classical approach that we will follow is to recognize the scattering function $C(\t,\nu)$ as \emph{power spectral density} of the doubly stationary stochastic \emph{time-variant transfer function}, also known as Kohn-Nirenberg symbol,%
\footnote{To accommodate the usage of distributions, such as the Dirac delta function $\delta(t)$, as components of the sounding signal and of the spreading function, the above integral and the equality \eqref{eq:channel} should be understood weakly. We preserve integral notation for clarity.
A rigorous treatment of functional analytic aspects of our approach can be found in \cite{PfaZh03}.}
\[
\KN(t, f) = \iint \steta(\tau,\nu) \epi{(\tau f - \nu t)} \d\tau \d \nu
\]
with
\begin{equation}\label{eq:KN.channel}
\H \x(t) = \int \KN(t,f) \: \FT \x(f) \epi{ f t } \d f.
\end{equation}
This way, we can estimate $C(\t,\nu)$ using the 2D \emph{averaged periodogram estimator} of $\KN$ by taking an ensemble average over the instances $\KN^{(j)}$ of the 2D Fourier transform of $\KN$ obtained by multiple soundings ($j=1,\dotsc, J$)
\[
\Chat(\t,\nu) = \frac{1}{J} \sum_{j=1}^J \absq{ \FT \KN^{(j)} (t, f)} =  \frac{1}{J} \sum_{j=1}^J \absq{\steta^{(j)}(\t,\nu)}.
\]
Note that $\KN^{(j)}$ (or, equivalently, $\steta^{(j)}$) can be accessed only via realizations of the echo $\y^{(j)}(t) = \H^{(j)}\, x(t)$, where $\H^{(j)}$ corresponds to the $j$-th realization of the channel. We specify our choice of the sounding signal $\x(t)$ below.

\subsection{Channel sounding}\label{sec:channel-sounding}
The technique of using unweighted pulse trains as sounding signals  for identification of channels with rectangular spreads has been introduced by Kailath \cite{Kailath}.

Suppose, for illustration, that the spreading function is supported on a rectangle $[0, \Tmax]\times[-\frac{\Bmax}{2}, \frac{\Bmax}{2}]$ such that $\Bmax \Tmax < 1$, that is, the channel is underspread.
Consider the representation of the channel
\[
\H\x(t) = \int \h(t,\tau)\: \x(t-\tau) \d \tau,
\]
where the \emph{time-variant instant response} $\h(t,\t)$ is an inverse Fourier transform of $\steta(\t,\nu)$ in the frequency variable,
\[
\h(t,\tau) = \int \steta(\tau, \nu) \epi{t\nu} \d \nu.
\]
It is easy to see that $\h$ is supported in $\tau$ on the same interval $[0,\Tmax]$ and $\h$ is $\Bmax/2$-bandlimited in $t$.
Observe that sending a single pulse  $\x_n(t) = \delta(t-n\Tmax)$ at time $n\Tmax$ produces the response
\[
\H \x_n(t) = \int \h(t,\tau) \: \delta(t- n\Tmax - \tau) \d\tau = \h(t, t-n\Tmax),
\]
which does not overlap with the echo from any other such pulse $\x_{n'}(t), n'\neq n$, due to the support restriction of $\h$ in the $\tau$ variable.
In other words, performing such soundings for all $n\in\Z$ is equivalent to a single sounding of $\H$ with an unweighted pulse train\footnote{A Russian letter $\Shah$, pronounced ``shah'', is traditionally chosen to denote a pulse train due to its shape.}
\[
\Shah(t) \triangleq \sum_{n\in\Z} \delta(t-n\Tmax).
\]

For a fixed $t\in [0, \Tmax]$, say, $t=0$, performing such sounding, we obtain regular samples 
\[
\{ \H \x_n(0) \}_{n\in\Z} = \{ \h(0, n\Tmax) \}_{n\in\Z}
\]
of the impulse response $\h(0, \placeholder)$ at a rate $\Tmax$.
Since $\h(0, \placeholder)$ is bandlimited with bandwidth $\Bmax/2 = \frac{1}{2\Tmax}$, we can interpolate missing information about $\h(0, \placeholder)$ using Papoulis sampling theorem in the mean-square sense \cite{Papoulis}.
Clearly, the choice of $t=0$ is arbitrary, and the same procedure allows recovery of the entire $\h(t, \tau)$ on $(t,\tau) \in \R\times [0, \Tmax]$ from the echo to the delta train $\Shah(t)$ with a reconstruction formula \cite{Pfander1,PfWal}
\begin{align*}
\h(t,t+\tau) = \chi(\tau / \Tmax) \sum_{k\in\Z} \big(\H\sum_{n\in\Z} T_{n\Tmax} \delta\big)(\tau+k\Tmax)\, \frac{\sin(\pi \Tmax (t-k))}{\pi \Tmax (t-k)}\,.
\end{align*}
Informally, this shows that the entire class of operators with spreading functions supported on a rectangle $S \triangleq [0,\Tmax] \times [-\Bmax/2, \Bmax/2]$, which coincides with the so-called \emph{operator Paley-Wiener space} $\StOPW(S{\times} S)$, can be identified by the delta train $\Shah(t)$ \cite{PfWal}.

In the WSSUS setting, it is easy to observe that  a generic \emph{non-rectangular} support set $M = \supp C(\t,\nu)$ with an overspread bounding box ($\Bmax\Tmax>1$), no unweighted impulse train $\sum_{k\in\Z} \delta(t-kT)$ would be able to resolve the entire operator Paley-Wiener space $\StOPW(M)$ of operators with spreading functions supported on $M$ \cite{PfWal}.
The echo to any such sounding signal will undergo time- or frequency aliasing that will preclude identification.

We prove that by using weighted impulse trains
$ \sum_{k\in \Z} c_k \, \delta(t- kT)$ introduced in \cite{PfWal} as sounding signals, the aliasing effects can be controlled and reverted in case that the set $\supp C(\t,\nu)$ is bounded.
As such, our algorithm can be seen as a generalization for Kailath sounding in case of $\supp C(\t,\nu)$ non-rectangular and overspread.

Unrealistic properties of the periodic impulse train (such as infinite duration and infinite crest factor) prevent this technique from being immediately put to practice as is.
The value of the technique proposed here lies in making transparent the ways how time-frequency analysis machinery works inside the channel.
For example, effects of applying different filters to the input and output, such as time-gating the sounding signal, bandpass filter to the received signal and changing the pulse shape by convolving it with some smoother kernel, can then be isolated and analyzed separately \cite{KraPfa}.

\subsection{Numerical experiments}
A discretization of the channel, undertaken here for the purposes of digital simulation, can perhaps serve as a starting point for such analysis.

\section{Theoretic results on scattering function identification}\label{sec:target_id}
In this section we show how the compactly supported scattering function $C(\t,\nu)$ can be reconstructed exactly  given complete knowledge of the second order statistics of the echo $\y(t)$ to the delta train sounding signal $\x(t)$.

\subsection{Geometry of the support set \texorpdfstring{$\supp C(\t,\nu)$}{supp C(τ,ν)}}\label{sec:geometry}
Let $ \steta(\t,\nu)$ be the spreading function of a radar target $H$ with the distributional support set $\supp \steta(\t,\nu)$ that lies within the compact set $[0,\Tmax]\times[0, \Bmax]$ with $\Tmax \Bmax=L\in \N$. 
\footnote{Other bounding boxes, for example, a causal symmetric box $[0,\Tmax]\times[-\frac{\Bmax}{2}, \frac{\Bmax}{2}]$, can be accommodated by a simple translation \cite{PfWal}.}
Since $\EXP\absq{\steta(\t,\nu)}=0$ implies $\steta(\t,\nu)=0$ almost surely, in the WSSUS case the support sets of the spreading function and the scattering function coincide.

Let $T \triangleq 1/\Bmax=\Tmax/L$, $B \triangleq 1/\Tmax=\Bmax/L$ and $ R = [0,T) \times [0,B)$, so
\begin{equation}\label{eq:steta.covering.full}
\supp C(\t,\nu) \subseteq \bigcup_{a,b=0}^{L{-}1}  \Bigl[a T, (a+1)T\Bigr) \times \Bigl[b B, (b+1) B\Bigr).
\end{equation}

The assumption \eqref{eq:steta.covering.full} suffices to fully describe the herein proposed channel estimator.

In order to improve the performance of the estimator, the condition $\supp \steta(\t,\nu)\subseteq [0,\Tmax]\times[0, \Bmax]$ may be replaced with a more general condition that $\supp \steta(\t,\nu)$ is contained within a fundamental domain of $\R^2$ under a lattice $\Tmax\Z \times \Bmax\Z$ acting on it by translations.
That is, we can replace  \eqref{eq:steta.covering.full} by the requirement that for some discrete finite set $\Gamma = \{(a_j, b_j)\} \subset \Z^2$ satisfying the anti-aliasing condition
\begin{equation}
\label{eq:anti-aliasing}
\Gamma \cap \Big(\Gamma + (m L, n L)\Big) = \varnothing \quad \text{ for all } m,n \in \Z,
\end{equation}
the support of the scattering function is covered by translates of the prototype rectangle $R = [0, T)\times[0,B)$,
\begin{equation}\label{eq:steta.covering}
\supp C(\t,\nu) \subseteq \bigcup_{(a_j,b_j)\in\Gamma}  \Bigl[a_j T, (a_j+1)T\Bigr) \times \Bigl[b_j B, (b_j+1) B\Bigr),
\end{equation}
Potentially, we may need less than $L^2$ translates of $R$ to cover $\supp \steta (\t,\nu)$, but it immediately follows from \eqref{eq:anti-aliasing} that $\abs{\Gamma}\leq L^2$.
We would then call the support set $\supp  C(\t,\nu)$ \emph{$(R,\Gamma)$-rectified}.
(We simultaneously obtain an $(R,\Gamma)$-rectification of the set $\supp \steta(\t,\nu)$.) See \autoref{fig:rectification} for illustration.
In the following, we shall assume  \eqref{eq:steta.covering}, but advise the reader to first consider the special case \eqref{eq:steta.covering.full}, that is,  $\Gamma=\{0,1,\ldots,L{-}1\}{\times}\{0,1,\ldots,L{-}1\}$.

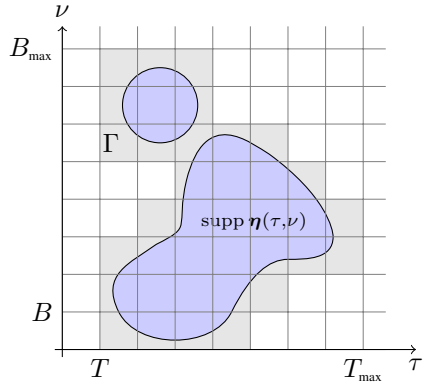
\begin{figure}[htbp]
\centering
% \includegraphics[width=0.5\columnwidth]{tikz-rectification}
% \includestandalone{tikz-rectification}
\begin{tikzpicture}[scale=0.5]
\def\L{9}
\def\Lminusone{8}
\begin{scope}[draw=gray!20, fill=gray!20]      %[pattern=north east lines wide, pattern color=gray,draw=gray]
\filldraw (1,0) -- ++(0,3) -- ++(1,0) -- ++(0,1) -- ++(1,0) -- ++(0,1) -- ++(-2,0) -- ++(0,3) -- ++(3,0) -- ++(0,-2) -- ++(2,0) -- ++(0,-1) -- ++(1,0) -- ++(0,-1) -- ++(1, 0) -- ++(0,-2)  -- ++(-2,0) -- ++(0,-1) -- ++(-1,0) -- ++(0,-1) -- ++(-4,0);
\end{scope}
\begin{scope}[shift={(0,-1)}]
\filldraw[fill=blue!20, draw=black]
                (1.5,2) .. controls (1, 3)  and (1.8,3.3) .. (2.5,3.8)
						.. controls (3.5,4.3) and (3.0,4.2) .. (3.3,5.5)
						.. controls (3.8,7.5) and (5, 6.5) .. (5.5,6.2)
						.. controls (7, 5.1) and (7.2, 4.2) .. (7.2,4)
						.. controls (7.2, 3.4) and (6.5, 3.4) .. (6,3.4)
						.. controls (5.5 ,3.4) and (5,3) .. (4.5,2)
						.. controls (4,1) and (2,1) .. (1.5,2)
						(2.6, 7.5) circle (1);
\node at (5.1,4.4) {$\scriptstyle{\supp \steta(\t,\nu)}$};
\node at (1.3,6.5) {$\Gamma$};
\end{scope}
\draw[style=help lines, step={(1,1)}] (0,0) grid (\L-0.4, \L-0.4);
\draw[->] (-0.2,0) -- (\L+0.4,0) node[below] {$\t$};
\draw[->] (0,-0.2) -- (0,\L-0.4) node[above] {$\nu$};
\node[below] at (1,0) {$T$};
\node[left] at (0,1) {$B$};
\node[below] at (8,0) {$\Tmax$};
\node[left] at (0,8) {$\Bmax$};
\end{tikzpicture}
\caption{Rectification of a scattering function, $L=8$}
\label{fig:rectification}
\end{figure}

For any $(a,b)\in\Gamma$, we define the patch $\steta_{(a,b)}(\t,\nu)$ to be
\begin{equation*}
 \steta_{(a,b)}(\t,\nu) \triangleq \chi_R(\t,\nu) \:  \epi{ b B T}\: \steta(\t+a T,\nu+b B),
\end{equation*}
where we use the phase-adjusting factor $\epi{b B T}$ for later convenience.
We form a column vector of  doubly periodized $\steta(\t,\nu)$
\begin{align}\label{eq:periodized.eta}
\stetavec(\t,\nu) =\brackets[\Big]{  \sum_{m,n\in\Z} \steta_{(a + m L, b+n L)}(\t,\nu) }_{a,b=0}^{L-1},
\end{align}
which reduces to
\begin{align*}
\stetavec(\t,\nu) =\brackets[\Big]{  \steta_{(a , b)}(\t,\nu) }_{a,b=0}^{L-1}
\end{align*}
if $\Gamma=\{0,1,\ldots,L{-}1\}{\times}\{0,1,\ldots,L{-}1\}$.
Note that due to absence of aliasing, each sum in \eqref{eq:periodized.eta} has only a single nonzero term, so $\stetavec(\t,\nu)$ is an $L^2$-long vector consisting of all patches $\steta_{(a,b)}(\t,\nu), (a,b)\in\Gamma$, and zero entries for the rest.
This gives us a decomposition of the spreading function 
\[
 \steta(\t,\nu)= \sum_{(a,b)\in \Gamma} \steta_{(a,b)}(\t-a T,\nu-b B) \empi{  b B T}.
\]
We define $C_{(a,b)} \colon [0,T)\times[0,B) \to \R^{+}$ by
\[
C_{(a,b)}(\t,\nu)  \: \delta(\t-\t') \: \delta(\nu-\nu') \coloneqq \E{ \stetavec_{(a,b)}(\t,\nu) \:  \conj{\stetavec_{(a,b)}(\t',\nu')}},
\]
and let the vector-valued function
\begin{equation}\label{eq:defn.D}
\Cvec(\t,\nu) \coloneqq  \brackets*{ C_{(a,b)}(\t,\nu) }_{a,b=0}^{L-1}.
\end{equation}
Observe now that due to the WSSUS property of the channel, different patches are uncorrelated, that is, the covariance matrix $\E{\stetavec(\t,\nu)\: \stetavec(\t',\nu')^*}$ of the vector $\stetavec(\t,\nu)$ has nonzero entries only on the diagonal\footnote{Following Matlab notation, for any vector $\vec{v} :  \Z^{L^2} \to X$, we define a diagonal matrix $\diag \vec{v}$ to have elements of $\vec{v}$ on the diagonal, and zeros elsewhere, that is,
$(\diag \vec{v})[i,j] = \delta_{ij} \: \vec{v}[i],$ where $\delta_{ij}$ is a Kronecker delta.}
\begin{equation}\label{eq:E.Cvec}
\E{\e(\t,\nu) \: \e(\t',\nu')^*} = \delta(\t-\t')\: \delta(\nu-\nu') \:\diag \Cvec(\t,\nu).
\end{equation}
We now have a scattering function decomposition
\begin{equation}\label{eq:patch}
\begin{split}
C(\t,\nu) &= \sum_{(a,b)\in\Gamma} C_{(a,b) \bmod L} (\t -a T, \nu - b B) \chi_{[0,T)\times[0,B)}(\t-aT, \nu- bB)
\end{split}
\end{equation}
for all $(\t,\nu) \in [0,\Tmax]\times[0,\Bmax]$, and define
\begin{equation}\label{eq:Cfull}
\Cfull(\t,\nu) \coloneqq \Vec \diag \Cvec(\t,\nu),
\end{equation}
where we use the vectorization operation $\Vec \colon \C^{M\times N} \to \C^{MN}$ \cite{VanLoan}
\[
(\Vec{V})_{iN+j} \coloneqq V_{i,j}, \quad i=1,\dotsc, M, \quad  j=1, \dotsc, N,
\]

\subsection{Finite-dimensional Gabor frames}\label{sec:gabor}
Let $c$ be a column vector in $\C^L$  that we will choose later, and let $G = [ c_{(a,b)} ]^{L-1}_{a,b=0}$ be the Weyl-Heisenberg-Gabor frame for $\C^L$, where we define
\[
c_{(a,b)} \triangleq \pi(a,b) c = \pi(\lambda) c,
\]
where the discrete time-frequency shift $\pi(\lambda)$ is defined via
\[
\pi(a, b)c[j]= \epi{r b /L} c[j-a],\ j = 1,\dotsc,L.
\]
The vector $c$ can be chosen in such a way that any $L$-element subset of $G$ is linearly independent, a condition of the frame known as the \emph{Haar property} \cite{LPW, Mal13}
The selection procedure allows us to choose $c$ with elements having modulus one, if necessary.
In fact, choosing the entries of $c$ randomly from a uniform distribution on a unit circle guarantees the Haar property to hold with probability one.
However, it turns out that such choice of $c$ is not optimal for this setting from the numerical viewpoint.
In \autoref{thm:singular} we characterize vectors $c$ which are optimal with respect to their numerical performance.

We define $G_\Gamma$ to be the $L\times \abs{\Gamma}$ submatrix of those columns of $G$ indexed by the covering $\Gamma \bmod L$, that is,
\[
G_\Gamma = \begin{bmatrix}  c_{(a,b) \bmod L}  \end{bmatrix}_{(a,b)\in\Gamma \bmod L}
\]
It is useful to note that it is the choice of the Gabor frame guarantees that  $c_{(a,b) \bmod L} = c_{(a,b)}$ for all $a,b$, not just for $a,b \in \{0, \dotsc, L-1\}$.

As illustrated below, we are interested in the Kronecker product matrix $\GGGamma \colon  L^2 \times \abs{\Gamma}^2 \to \C$, and its submatrix of Kronecker products of individual vectors with themselves
\begin{equation}\label{eq:defn.K}
K \coloneqq (\GG)_{\diag (\Gamma \times \Gamma)} = \begin{bmatrix} \conj{c_{(a,b)}} \otimes c_{(a,b)} \end{bmatrix}_{(a,b)\in\Gamma}.
\end{equation}

The full Kronecker product matrix  $\GG$ does not have the Haar property,%
\footnote{The same definition for the Haar property still makes sense, if we view $\GG$ as a Gabor frame over a non-cyclic abelian group $\Z_L \times \Z_L$ with window $\conj{c} \otimes c$.}
in fact, there always exist column subsets of size $3 L-2$ of it that are linearly dependent).
However, the \enquote{diagonal} geometry of this particular subset of $\Gamma \times \Gamma$ guarantees that $K = (\GG)_{\diag (\Gamma \times \Gamma)}$ is full rank for any $\Gamma$, per the following result.

\begin{theorem}(\cite[Theorem 15]{PfaZh02})\label{thm:diagonal}
Let $G = \{ c_{(a,b)} \}_{a,b=0}^{L-1}$  be a Gabor frame generated by $c \in \C^L$. Then for almost all $c \in \C^L$, the set
\[
\Kfull \coloneqq \braces*{ \conj{\pi(a,b) \, c } \otimes \pi(a,b)\, c}_{a,b \in \Z_L}
\]
of all tensor products of each Gabor element with itself is linearly independent.
\end{theorem}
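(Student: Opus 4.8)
The plan is to show that the polynomial (in the real and imaginary parts of the entries of $c$)
\[
P(c) = \det\!\big[\text{any } L^2 \text{ columns of the } L^2\times L^2 \text{ matrix with columns } \conj{\pi(a,b)c}\otimes\pi(a,b)c\big]
\]
is not identically zero; since the zero set of a nonzero polynomial has Lebesgue measure zero, this yields full rank for almost all $c$. Because $\Kfull$ has exactly $L^2$ columns indexed by $(a,b)\in\Z_L\times\Z_L$ and lives in $\C^{L^2}$, it suffices to exhibit a single $c_0\in\C^L$ for which $\Kfull$ is a basis of $\C^{L^2}$, i.e. for which the Gram-type determinant $\det\big[\,\ip{\conj{\pi(a,b)c_0}\otimes\pi(a,b)c_0,\, \conj{\pi(a',b')c_0}\otimes\pi(a',b')c_0}\,\big]_{(a,b),(a',b')}\neq 0$. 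The standard choice is a suitably generic $c_0$; a clean alternative is to pick $c_0$ so that the associated discrete ambiguity-type quantities separate all the shifts.

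The key computational fact I would use is that the inner product of two tensor squares factors: $\ip{\conj{\pi(\lambda)c}\otimes\pi(\lambda)c,\ \conj{\pi(\mu)c}\otimes\pi(\mu)c} = \big|\ip{\pi(\lambda)c,\pi(\mu)c}\big|^2 = |A_c(\mu-\lambda)|^2$, where $A_c$ is the discrete (cross-)ambiguity function of $c$, which depends only on the difference $\mu-\lambda$ in $\Z_L\times\Z_L$ up to unimodular phase factors. Hence the Gram matrix of $\Kfull$ is, up to conjugation by a diagonal unitary, a $\Z_L\times\Z_L$ \emph{circulant} (group-circulant) matrix with symbol $|A_c|^2$ evaluated on the group. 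Its determinant is therefore the product over all characters $\omega$ of $\widehat{\Z_L\times\Z_L}$ of $\sum_{\lambda}|A_c(\lambda)|^2\,\omega(\lambda)$, i.e. the product of the values of the $2$D finite Fourier transform of $|A_c|^2$. So the problem reduces to: choose $c$ so that none of the $L^2$ Fourier coefficients of $|A_c|^2$ vanishes. Equivalently, by the discrete analogue of Moyal/Wigner identities, these Fourier coefficients are themselves (up to constants) values $|A_c(\cdot)|^2$-type convolutions, and one shows they are generically nonzero.

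Concretely, I would proceed in these steps. First, record the factorization of the tensor-square inner product and identify the Gram matrix of $\Kfull$ with a group-circulant over $\Z_L\times\Z_L$ after stripping unimodular diagonal factors (these do not affect rank). Second, diagonalize this circulant by the finite Fourier transform on $\Z_L\times\Z_L$, reducing the determinant to $\prod_{\omega}\widehat{|A_c|^2}(\omega)$. Third, compute $\widehat{|A_c|^2}$ and show that for a generic $c$ — e.g. $c$ with algebraically independent entries, or a specific explicit vector — all these numbers are nonzero; one convenient route is to note that $\widehat{|A_c|^2}$ is again expressible through the ambiguity function of $c$, so the vanishing locus is a proper algebraic subvariety. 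Fourth, invoke that a nonidentically-zero real-analytic (indeed polynomial) function on $\C^L\cong\R^{2L}$ has a null set of measure zero, concluding the claim. Since \autoref{thm:diagonal} is cited as \cite[Theorem 15]{PfaZh02}, I would also simply reference that source and present the above as the structure of the argument.

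The main obstacle is the third step: verifying that \emph{every} one of the $L^2$ Fourier coefficients of $|A_c|^2$ is nonzero for at least one $c$ (equivalently, that the ``bad'' set is a proper subvariety and not all of $\C^L$). One must rule out a degenerate situation in which some character annihilates $|A_c|^2$ for all $c$ — this would happen only if there were an identity forcing a linear combination of the $|A_c(\lambda)|^2$ to vanish identically in $c$, which one excludes by exhibiting an explicit $c$ (or by a dimension/genericity count) where all coefficients are strictly positive or at least nonzero. Everything else — the factorization, the circulant diagonalization, the measure-zero conclusion — is routine once this nonvanishing is in hand.
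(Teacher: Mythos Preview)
Your approach is correct and closely parallels the paper's, though with a small detour. The paper itself does not prove this theorem in-text (it is quoted from \cite{PfaZh02}), but the immediately following \autoref{thm:singular}, which the paper \emph{does} prove, computes the singular values of $\Kfull$ explicitly as $|V_c c(a,b)|=|\langle c,\pi(a,b)c\rangle|$; linear independence for almost all $c$ follows at once, since each $\langle c,\pi(a,b)c\rangle$ is a nontrivial polynomial in the entries of $c$ and $\conj c$, so its zero set has Lebesgue measure zero.

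Your route via the circulant Gram matrix is essentially the same diagonalization, just viewed from the other side. You arrive at the eigenvalues of $\Kfull^*\Kfull$ as the Fourier coefficients $\widehat{|A_c|^2}(\omega)$, whereas the paper's direct computation of $\|\Kfull v\|_2^2$ (this is precisely the quantity $\|W\{v_{a,b}\}\|_2^2$ worked out in the proof of \autoref{thm:singular}) yields them as $|V_cc(m,r)|^2$ without passing through an extra Fourier transform. The two answers agree because of the discrete Moyal/radar-ambiguity self-duality: the two-dimensional (symplectic) finite Fourier transform of $|V_cc|^2$ is again $L\,|V_cc|^2$, up to the coordinate swap $(a,b)\mapsto(-b,a)$. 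Consequently your ``main obstacle'' in step three --- showing that every Fourier coefficient of $|A_c|^2$ is generically nonzero --- collapses to the simpler statement that $V_cc(a,b)\neq 0$ for all $(a,b)$, generically in $c$. The paper's direct computation buys exactly this shortcut, so one never has to invoke or rediscover the self-duality identity; conversely, your framing makes the group-circulant structure of the Gram matrix explicit, which is conceptually pleasant and would let you read off the same singular-value formula once the identity is in hand.
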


Moreover, we have the following characterization of vectors $c$ for which the system $\Kfull$ above is in general linear position.
This is in contrast to the much more difficult problem of characterizing $c\in\C^L$ such that the Gabor system $\{ \pi(a,b) c \}_{a,b=0}^{L=1}$ is in general linear position.

\begin{theorem}\label{thm:singular}
 The singular values of the $L^2\times L^2$ matrix $\Kfull$
 are given by the values of the (discrete) short-time Fourier transform $|V_c c(a,b)| \triangleq |\langle c, \pi (a,b)c\rangle|$, $(a,b) \in \Z_L\times \Z_L$.
\end{theorem}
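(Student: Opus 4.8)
The plan is to replace $\Kfull$ by its Gram matrix, recognize the latter as an honest circulant over the group $\Z_L\times\Z_L$ whose generating symbol is $|V_cc|^{2}$, and then close the argument with the discrete analogue of the classical fact that the squared ambiguity function is, up to a symplectic rotation, its own two-dimensional Fourier transform.

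\textbf{Step 1 (reduction to a circulant Gram matrix).} The singular values of $\Kfull$ are the nonnegative square roots of the eigenvalues of the Hermitian matrix $\Kfull^{*}\Kfull$, whose rows and columns are indexed by $\lambda=(a,b)$ and $\mu=(a',b')$ in $\Z_L\times\Z_L$. The tensor-product inner product factors, $\langle \overline{u}\otimes u,\ \overline{v}\otimes v\rangle=\overline{\langle u,v\rangle}\,\langle u,v\rangle=|\langle u,v\rangle|^{2}$, and the covariance relation for the discrete time-frequency shifts reads $\pi(\lambda)^{*}\pi(\mu)=\omega(\lambda,\mu)\,\pi(\mu-\lambda)$ with a unimodular factor $\omega$. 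Hence
\[
\bigl(\Kfull^{*}\Kfull\bigr)_{\lambda,\mu}
=\bigl|\langle \pi(\lambda)c,\pi(\mu)c\rangle\bigr|^{2}
=\bigl|\langle c,\pi(\mu-\lambda)c\rangle\bigr|^{2}
=\bigl|V_cc(\mu-\lambda)\bigr|^{2}.
\]
The unimodular cocycle disappears once we pass to the modulus, so the entry depends only on the difference $\mu-\lambda\in\Z_L\times\Z_L$ (recall that $\pi$ is $L$-periodic in each argument, so $|V_cc|$ is a well-defined function on $\Z_L\times\Z_L$). Thus $\Kfull^{*}\Kfull$ is the circulant matrix over $\Z_L\times\Z_L$ with symbol $\phi(m,n):=|V_cc(m,n)|^{2}$.

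\textbf{Step 2 (diagonalization and the self-reciprocity identity).} Circulants over $\Z_L\times\Z_L$ are simultaneously diagonalized by the two-dimensional discrete Fourier transform, so the eigenvalues of $\Kfull^{*}\Kfull$ are the Fourier coefficients $\widehat\phi(p,q)$, $(p,q)\in\Z_L\times\Z_L$. Writing $\phi=V_cc\cdot\overline{V_cc}$, substituting the definition of $V_cc$, and carrying out the two summations defining $\widehat\phi$ in turn — each collapsing one pair of indices through orthogonality of the characters of $\Z_L$ — one arrives at
\[
\widehat\phi(p,q)=L\,\bigl|V_cc(q,-p)\bigr|^{2}=L\,\phi(q,-p),
\]
i.e.\ $\phi$ is, up to the scalar $L$ and the symplectic rotation $(p,q)\mapsto(q,-p)$, an eigenfunction of the $2$D Fourier transform. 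Since that rotation is a bijection of $\Z_L\times\Z_L$, the multiset of eigenvalues of $\Kfull^{*}\Kfull$ equals $\{\,L\,|V_cc(a,b)|^{2}:(a,b)\in\Z_L\times\Z_L\,\}$, and the singular values of $\Kfull$ are therefore the values $|V_cc(a,b)|$, up to the fixed normalization of the short-time Fourier transform (resp.\ of the columns of $\Kfull$) that accounts for the harmless overall factor $\sqrt{L}$.

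I expect the self-reciprocity identity in Step~2 to be the real obstacle: it is the discrete incarnation of the statement that $|A_c|^{2}$ coincides with its own symplectic Fourier transform, and although it is in the end a direct computation, one must keep careful track of the order of the modulation and translation operators and of the signs in the exponentials so that the resulting change of summation indices is genuinely a symplectic automorphism of $\Z_L\times\Z_L$; matching the normalization so that the statement reads verbatim as $|V_cc(a,b)|$ is a second, minor point. As an alternative to Steps~1--2 one can argue representation-theoretically: set $\Pi(\lambda):=\overline{\pi(\lambda)}\otimes\pi(\lambda)$, observe that the two Heisenberg cocycles cancel (the one carried by the conjugated factor is the inverse of the other), so that $\Pi$ is an \emph{honest} unitary representation of $\Z_L\times\Z_L$ on $\C^{L^{2}}$; it decomposes $\C^{L^{2}}=\bigoplus_{r}\Span\{e_{j}\otimes e_{j+r}\}$ into all $L^{2}$ characters, each with multiplicity one, so the orbit sum $\Kfull\Kfull^{*}=\sum_{\lambda}\Pi(\lambda)\,(\overline c\otimes c)(\overline c\otimes c)^{*}\,\Pi(\lambda)^{*}$ is diagonal in the resulting eigenbasis $\{v_{r,s}\}$, with diagonal entries proportional to $|\langle \overline c\otimes c,\ v_{r,s}\rangle|^{2}$, which one again identifies (up to the same factor $L$) with the list $\{|V_cc(a,b)|^{2}\}$.
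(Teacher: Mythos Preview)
Your argument is correct and takes a genuinely different route from the paper's. The paper computes $\|\Kfull v\|^{2}$ directly for an arbitrary coefficient vector $v$: it writes out the entries of the rank-one sum $\sum_{a,b} v_{a,b}\,(\pi(a,b)c)(\pi(a,b)c)^{*}$, performs a chain of index shifts, inserts a one-dimensional Parseval identity in the second variable, and at the end recognizes a (scaled) two-dimensional DFT of $v$ weighted pointwise by $|\langle c,\pi(m,r)c\rangle|^{2}$. Your approach is more structural: you first observe that $\Kfull^{*}\Kfull$ is a $\Z_L\times\Z_L$-circulant with symbol $|V_cc|^{2}$ (an immediate consequence of the covariance relation, once the unimodular cocycle is absorbed into the modulus), and then invoke the discrete self-reciprocity of the squared ambiguity function, $\widehat{|V_cc|^{2}}(p,q)=L\,|V_cc(-q,p)|^{2}$, to read off the eigenvalues. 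What your route buys is a clean separation of the two ingredients --- the group invariance that forces circulant structure, and the single nontrivial harmonic-analytic identity --- together with a transparent explanation of \emph{why} the 2D DFT diagonalizes the problem. What the paper's route buys is self-containedness: no appeal to an external ambiguity-function lemma is needed, since that identity is effectively rederived inside the same string of substitutions. Your remark about the overall factor $\sqrt{L}$ is on point: both computations in fact yield singular values $\sqrt{L}\,|V_cc(a,b)|$; the paper's closing sentence asserts that $\{L^{-1/2}e^{2\pi i(bm-ar)/L}\}_{(m,r)}$ is orthonormal in $\C^{L\times L}$, which is off by exactly that factor.
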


\begin{proof}
 For a vector $v[a,b]_{a,b=0}^{L-1}$ consider the matrix $W \triangleq \sum_{a,b} v_{a,b} \, (\pi(a,b)c)\, (\pi(a,b)c)^\ast$.
We compute
\allowdisplaybreaks[4]
\begin{align*}
\|W\{v_{a,b}\}\|^2_2  &= \sum_{m,n} \big|W\{v_{a,b}\}(m,n)\big|^2  \\
 &=\sum_{m,n} \big|\sum_{a,b} v_{a,b} \, \pi(a,b)c[m]\,\pi(a,-b)\overline{ c[n] }\big|^2 \\
 &=\sum_{m,n} \big|\sum_{a,b} v_{a,b} \, e^{2\pi i b m/L} \, c[m-a] \empi{ b n/L} \, \overline{ c[n-a] }\big|^2 \\
 &=\sum_{m,n} \big|\sum_{a,b} v_{a,b} \epi{b (m-n)/L} \, c[m-a]\,  \conj{ c[n-a]}\big|^2 \\
 &=\sum_{m,n} \big|\sum_{a,b} v_{a,b} \epi{b m/L} \, c[m+n-a]\,  \conj{ c[n-a]}\big|^2 \\
 &=\begin{multlined}[t][0.7\columnwidth]
   \tfrac 1 L \sum_{m,r} \big|\sum_{a,b} v_{a,b} \, \epi{b m/L}  \sum_n c[m+n-a]\,  \overline{ c[n-a]}\, e^{-2\pi i n r/ L}\big|^2
 \end{multlined} \\
 &=\begin{multlined}[t][0.7\columnwidth]
   \tfrac 1 L \sum_{m,r} \big|\sum_{a,b} v_{a,b} \epi{b m/L}  \sum_n c[m+n]\,  \overline{ c[n]}\, e^{-2\pi i (n+a) r/ L}\big|^2
 \end{multlined} \\
 &=\begin{multlined}[t][0.7\columnwidth]
 \tfrac 1 L \sum_{m,r} \big|\sum_{a,b} v_{a,b} \epi{(b m/L-a r/L)} \sum_n c[m+n]\,  \conj{ c[n]}\, e^{-2\pi i n r/ L}\big|^2
\end{multlined} \\
 &=\begin{multlined}[t][0.7\columnwidth]
   \tfrac 1 L \sum_{m,r} \big|\sum_{a,b} v_{a,b} \epi{(b m/L-a r/L)} \sum_n c[n]\,  \conj{ c[n-m]}\, e^{-2\pi i (n-m) r/ L}\big|^2
\end{multlined} \\
   &=\begin{multlined}[t][0.7\columnwidth]
    \tfrac 1 L \sum_{m,r} \big|\Big(\sum_{a,b} v_{a,b} \epi{(b m/L-a r/L)}\Big) \langle c, \pi(m,r) c\rangle \epi{m r/ L}\, \big|^2 
    \end{multlined}  \\
   &= \sum_{m,r} \big|\sum_{a,b} v_{a,b} \,\tfrac 1 {\sqrt{L}} \epi{(b m/L-a r/L)}\big|^2 \ \big|\langle c,\pi(m,r)c \rangle  \big|^2.
\end{align*}
As $\braces*{\tfrac 1{\sqrt{L}}\ \epi{(b m/L-ar/L)}}_{(a,b) \in \Z^2}$ is an orthonormal basis for $\mathbb C^{L\times L}$, the result follows.
\end{proof}

\begin{corollary} \label{cor:condition-number-fiducial}
The condition number of the $L^2\times L^2$ matrix $\Kfull$
is bounded below by $\sqrt{L+1}$. The lower bound  is achieved if and only if $\{\pi(a,b)\,c\}$ is an equiangular frame, that is, if and only if $c$ is a fiducial vector.
\end{corollary}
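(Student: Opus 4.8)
The plan is to read the condition number straight off \autoref{thm:singular} and then control the spread of the singular values with one Parseval-type identity for the discrete short-time Fourier transform $V_c c$. By \autoref{thm:singular} the singular values of $\Kfull$ are, up to a common positive constant, the numbers $|\langle c,\pi(a,b)c\rangle|$ with $(a,b)\in\Z_L\times\Z_L$, so
\[
\cond\Kfull=\frac{\max_{(a,b)}|\langle c,\pi(a,b)c\rangle|}{\min_{(a,b)}|\langle c,\pi(a,b)c\rangle|}.
\]
First I would pin down the numerator: since each $\pi(a,b)$ is unitary, Cauchy--Schwarz gives $|\langle c,\pi(a,b)c\rangle|\le\|c\|^2$, with equality at $(a,b)=(0,0)$, so the largest singular value equals $\|c\|^2$. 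As the condition number is scale invariant we may normalize $\|c\|=1$ and write $\cond\Kfull=1/m$ with $m\coloneqq\min_{(a,b)\neq(0,0)}|\langle c,\pi(a,b)c\rangle|$. If $m=0$ then $\Kfull$ is singular --- a measure-zero event by \autoref{thm:diagonal} --- and the asserted bound is vacuous, so assume $m>0$.

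Second, I would invoke the orthogonality relations for the finite Heisenberg group, $\sum_{(a,b)\in\Z_L\times\Z_L}|\langle c,\pi(a,b)c\rangle|^2=L\|c\|^4=L$, which follow from the discrete Parseval identity applied once in the modulation variable and once in the translation variable --- essentially the computation that closes the proof of \autoref{thm:singular}. Removing the $(0,0)$ term, which equals $1$, the remaining $L^2-1$ terms sum to $L-1$, hence their average is $(L-1)/(L^2-1)=1/(L+1)$. A minimum is at most an average, so $m^2\le 1/(L+1)$, and therefore $\cond\Kfull=1/m\ge\sqrt{L+1}$.

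For the equality claim, the bound $m^2\le 1/(L+1)$ is saturated exactly when $m^2$ coincides with the average of the off-diagonal values $|\langle c,\pi(a,b)c\rangle|^2$, $(a,b)\neq(0,0)$, and a minimum of a finite set equals its average only when all members are equal. Hence $\cond\Kfull=\sqrt{L+1}$ if and only if $|\langle c,\pi(a,b)c\rangle|^2=1/(L+1)$ for every $(a,b)\neq(0,0)$. Rewriting $\langle\pi(a,b)c,\pi(a',b')c\rangle$ as $\langle c,\pi(a'{-}a,b'{-}b)c\rangle$ up to a unimodular factor, this says exactly that all pairwise inner products of the Gabor system $\{\pi(a,b)c\}_{(a,b)\in\Z_L\times\Z_L}$ share the common modulus $1/\sqrt{L+1}$, i.e. the frame is equiangular --- equivalently, $c$ is a fiducial vector (generator of a SIC-POVM). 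Conversely, for such a $c$ all $L^2-1$ off-diagonal singular values of $\Kfull$ equal $1/\sqrt{L+1}$ while the top one equals $1$, so the bound is attained. I do not anticipate a genuine obstacle; the two things to watch are getting the normalization in the Parseval identity right (it is easy to be off by a factor of $L$) and noting that the largest singular value is automatically $\|c\|^2$, so that only the minimum needs optimizing.
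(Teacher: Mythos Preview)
Your proposal is correct and follows essentially the same route as the paper: read off the singular values from \autoref{thm:singular}, note the maximum is $\|c\|^2$ at $(a,b)=(0,0)$, use the Parseval identity $\sum_{a,b}|V_c c(a,b)|^2=L\|c\|^4$ to bound the minimum off-diagonal value, and identify the equality case as the equiangular (fiducial) configuration. Your ``minimum $\le$ average'' phrasing is in fact a little cleaner than the paper's, which invokes the Welch bound by name rather than deriving it on the spot, but the content is identical.
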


\begin{proof}
\autoref{thm:singular} implies that the smallest and largest singular values are given by
$\lambda_{\text{min}} = \min_{a,b} \big|\langle c,\pi(a,b)c\rangle\big|^2$,
and respectively,
$ \lambda_{\text{max}} = \max_{a,b} \big|\langle c,\pi(a,b)c\rangle\big|^2$
$=\norm{c}^2$.
We conclude that
$\{ \pi(a,b)c \otimes \pi(a,b)c^\ast \}_{a,b \in  \Z_L}$ is a Riesz basis if and only if $V_c c[a,b]=\langle c,\pi(a,b)c\rangle$ never vanishes.

To estimate the smallest singular value, we compute
\begin{align*}
 \|V_c c\|_2^2 & = \sum_{a,b}\big|V_c c[a,b]\big|^2                                                                        \\
               & = \sum_{a,b}\big|\sum_k c[k] \overline{ c[k-a]} e^{-2\pi i k b/L}\big|^2                                   \\
               & = \tfrac 1 L \sum_{a,r}\big| \sum_k c[k] \,\conj{ c[k-a]} \, \sum_b \empi{(k-r)b/L}\big|^2              \\
               & = \tfrac 1 L \sum_{a,r}\big| \sum_k c[k] \, \conj{ c[k-a]} \, L \, \delta(k-r)\big|^2                        \\
               & =  L \sum_{r}\sum_{a}\big| c[r] \, \conj{ c[r-a]}\big|^2 = L \norm{c}^4.
\end{align*}
Clearly, the $\min_{a,b} |V_c c[a,b]|$ is going to be the largest if $V_c c$ is constant with the exception of a point $(0,0)$.
This holds if $\{\pi(a,b) c \}$ is an equiangular frame, namely, if $c$ is a scalar multiple of a fiducial vector \cite{blanchfield2013orbits, appleby2007symmetric, appleby2005symmetric, Ras13,Waldron12}.
Assuming without loss of generality that $\norm{c}_2=1$, we have that $\min_{a,b} |V_c c[a,b]|=\max_{(a,b)\neq (0,0)} |V_c c[a,b]|$ achieves the Welch bound $\sqrt{\frac {L-1}{L^2-1}}=1/\sqrt{L+1}$.
The condition number is the ratio of largest and smallest singular values, that is,
\[
\cond \Kfull = \frac{\lambda_{\max}}{\lambda_{\min}} = \frac 1 {\frac{1}{\sqrt{L+1}}}=\sqrt{L+1}.\qedhere
\]
\end{proof}

\subsection{Zak transform and scattering function identification}\label{sec:zak}
We define the \emph{non-normalized Zak transform}  $\Zak \colon L^2(\R) \to L^2\left([0, L T]\times [0,B]\right)$ by
\begin{equation*}\label{eq:defn.Zak}
\Zak y(\t,\nu) \coloneqq \sum_{n \in \Z} y(t-n L T)\epi{n L T \nu}.
\end{equation*}
Consider the image $\Zak \y(\t,\nu)$ of the Zak transform of the echo $\y(t) = \H \x(t)$, defined on $(\t,\nu) \in [0, JT)\times[0,B)$, and form a column vector $\Zvec(\t,\nu)$ of $T\times B$ patches of it, again with a convenient phase-adjusting factor and a normalization.
Namely, for $(\t,\nu) \in [0,T)\times [0,B)$ and $p = 0, \dotsc, L-1$, we let
\begin{gather}
\Zee_p(\t,\nu)  \coloneqq B^{-1} \empi{\nu (\t + p T)} \: \Zak \y(\t+ p T,\nu),\\
\Zvec(\t,\nu) = \Bigl[ \Zee_p(\t,\nu) \Bigr]_{p=0.}^{L-1}
\label{eq:Zee_p}
\end{gather}

In \cite{PfWal} we exploit the tight connection of the Zak transform with Gabor frame theory to show
\begin{equation}\label{eq:mixing}\begin{split}
\Zvec(\t,\nu) = G \, \stetavec (\t,\nu),  
\end{split}\end{equation}
a possibly underdetermined system of equations.
Here, we compute for any  $(\t,\nu),(\t',\nu') \in [0,T)\times[0,B)$,
\[
\E{\Zvec(\t,\nu) \: \Zvec(\t',\nu')^*} = G \: \E{\e(\t,\nu) \,  \e(\t',\nu')^*} \,G^*,
\]
or in vectorized form, by \eqref{eq:E.Cvec} and \eqref{eq:defn.K},
\begin{align*}\label{eq:vec.ZZ}
\E{\Zvec(\t,\nu) \otimes \conj{\Zvec(\t',\nu')}} 
&= (\GG) \: \E{\stetavec(\t,\nu) \otimes \conj{\stetavec(\t',\nu')}}\\
&= \delta(\t-\t')\: \delta(\nu-\nu') \: (\GG)\:  \Cfull(\t,\nu) \\ 
&= \delta(\t-\t')\: \delta(\nu-\nu') \: \Kfull \: \Cvec(\t,\nu),
\end{align*}
where due to sparsity of $\Vec \diag \Cvec(\t,\nu)$ we can restrict the coefficient matrix $\GG$ to its submatrix $\Kfull$, or, in case of an even more meager support set, to $K = (\GG)_{\diag \Gamma \times\Gamma}$.

Since the support of the right-hand side equals the suppoer of the left-hand side, the autocorrelation of the Zak transform of the output $\Zak\,\y$ is described by
\begin{equation}\label{eq:Rz}
\Rz(\t,\nu) = K\: \Cvec(\t,\nu)
\end{equation}
for some observable column vector of functions $\Rz(\t,\nu)$ of length $L^2$ such that
\[
\Rz(\t,\nu) \: \delta(\t-\t')\: \delta(\nu-\nu') \coloneqq  \E{\Zvec(\t,\nu) \otimes \conj{\Zvec(\t',\nu')}}.
\]
Since by \autoref{thm:diagonal}, $K$ is left invertible, we can recover the vector $\Cvec(\t,\nu)$ of patches of the scattering function, and hence, the scattering function itself, by inversion
\begin{equation}\label{eq:C.Rz}
\Cvec(\t,\nu) = K^{-1} \: \Rz(\t,\nu),
\end{equation}
The above arguments lead to the following result.
\begin{theorem}\cite[Theorem 13]{PfaZh02}. \label{thm:Rz}
Let $\H$ be any WSSUS channel such that its scattering function $C(\t,\nu)$ has compact support,
rectified in a such way that \eqref{eq:steta.covering} holds for some rectangle $[0,T)\times[0,B)$.
There exists a vector $c \in \C^L$ for $L=\frac{1}{BT}$ such that the scattering function can  be reconstructed from the autocorrelation of the Zak transform of the output $\y(t) = \H \, \x(t)$ to the input given by the weighted impulse train $\x(t) = \sum_{k\in \Z} c_{k \bmod L} \: \delta(t-k T)$
using \eqref{eq:defn.D} and \eqref{eq:C.Rz}.
\end{theorem}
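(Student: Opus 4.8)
The statement collects the derivation carried out in Sections~\ref{sec:geometry}--\ref{sec:zak}, so the plan is to assemble that chain of identities and close it with an inversion argument. First I would fix notation: assuming $\supp C(\t,\nu)$ is $(R,\Gamma)$-rectified as in \eqref{eq:steta.covering}, the patch decomposition \eqref{eq:periodized.eta} packages the spreading function into the $L^2$-long vector $\stetavec(\t,\nu)$, whose only nonzero entries are the patches indexed by $\Gamma\bmod L$; the anti-aliasing condition \eqref{eq:anti-aliasing} is exactly what guarantees that each periodization sum in \eqref{eq:periodized.eta} collapses to a single term, so this repackaging is lossless and $C(\t,\nu)$ can in turn be recovered from $\Cvec(\t,\nu)$ via \eqref{eq:patch} and \eqref{eq:defn.D}.

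The core step is the mixing relation \eqref{eq:mixing}, $\Zvec(\t,\nu)=G\,\stetavec(\t,\nu)$, established in \cite{PfWal}: one applies the Zak transform $\Zak$ to $\y=\H\x$ with $\x$ the weighted delta train $\sum_k c_{k\bmod L}\,\delta(t-kT)$, partitions $\R$ into blocks of length $LT$, and uses the quasi-periodicity of $\Zak$ together with the time-frequency-shift structure of $\H$ to express the $p$-th patch $\Zee_p(\t,\nu)$ of $\Zak\y$ as a linear combination of the patches of $\steta$ whose coefficients are precisely the Gabor frame entries $c_{(a,b)}$. The phase-adjusting factors inserted in the definitions of $\steta_{(a,b)}$ and of $\Zee_p$ are chosen so that this combination is literally multiplication by the Gabor matrix $G$, with no leftover modulations; I would import this identity rather than re-derive it.

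Next I would pass to second-order statistics. The WSSUS hypothesis, in the form \eqref{eq:E.Cvec}, states that $\E{\e(\t,\nu)\,\e(\t',\nu')^*}=\delta(\t-\t')\,\delta(\nu-\nu')\,\diag\Cvec(\t,\nu)$; tensoring \eqref{eq:mixing} with its conjugate and taking expectations gives $\E{\Zvec(\t,\nu)\otimes\conj{\Zvec(\t',\nu')}}=\delta(\t-\t')\,\delta(\nu-\nu')\,(\GG)\,\Cfull(\t,\nu)$. Since $\Cfull=\Vec\diag\Cvec$ is supported only on the diagonal index set, the coefficient matrix $\GG$ may be restricted to its columns indexed by $\diag(\Z_L\times\Z_L)$, i.e.\ to $\Kfull$, or to $K=(\GG)_{\diag(\Gamma\times\Gamma)}$ when $\Gamma$ is a proper subset. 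Matching the supports of the two sides then yields \eqref{eq:Rz}, $\Rz(\t,\nu)=K\,\Cvec(\t,\nu)$, where $\Rz$ is obtained directly from the observable output covariance $A(t,t')$ by the Zak transform.

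Finally, the inversion. By \autoref{thm:diagonal}, for almost every $c\in\C^L$ the system $\Kfull$ is linearly independent, hence $K$ --- a column submatrix of it --- has full column rank and a left inverse; the same genericity lets us simultaneously demand that $G$ itself have the Haar property needed for the earlier steps, since the intersection of conditions that each hold for almost every $c$ again holds for almost every $c$, and $c$ may in addition be taken unimodular, or, optimally, a fiducial vector by \autoref{cor:condition-number-fiducial}. For such $c$ we set $\Cvec(\t,\nu)=K^{-1}\Rz(\t,\nu)$, recovering the patch vector, and \eqref{eq:patch}/\eqref{eq:defn.D} reassemble $C(\t,\nu)$ on $[0,\Tmax]\times[0,\Bmax]$. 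The main obstacle is honestly the distributional bookkeeping --- the delta-train input and the products of Dirac deltas in \eqref{eq:E.Cvec} --- which I would handle by invoking the functional-analytic framework of \cite{PfaZh03} and reading every identity weakly, exactly as the footnotes anticipate; the remaining content is the linear algebra above together with the cited identity \eqref{eq:mixing}.
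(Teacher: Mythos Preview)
Your proposal is correct and follows essentially the same route as the paper: the theorem is stated as a summary of the derivation in Sections~\ref{sec:geometry}--\ref{sec:zak}, and you have reproduced that chain---patch decomposition, the imported mixing relation \eqref{eq:mixing}, passage to second-order statistics via WSSUS to obtain \eqref{eq:Rz}, and inversion of $K$ via \autoref{thm:diagonal}---in the same order and with the same ingredients. The additional remarks you make about genericity of $c$, the Haar property, and the distributional framework of \cite{PfaZh03} are all consistent with how the paper handles these points.
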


\begin{remark}\label{rem:Kfull-or-not-Kfull}
The geometry of the support set dictates the rectification $(R,\Gamma)$ which in turn determines the matrix $K$ that needs to be inverted in \eqref{eq:C.Rz}.
Although the full matrix $\Kfull$ is shown to be invertible for $c$ fiducial, in the situation when the support is known, and has small area relative to the bounding box (that is, the cardinality of the index set $\Gamma$ is smaller than the maximum possible $L^2$), we might have better control of the condition number of the matrix $K$, which is a column submatrix of $\Kfull$.
Alternatively, computing a left inverse of $K$ might be numerically cheaper than a full matrix inversion.

In any case, the weight vector $c$ should be chosen in a way as to guarantee a beneficial inversion regime, for example, a fiducial vector.
Alternatively, the weight vector $c$ chosen componentwise independently uniformly at random from the unit circle permits identification with probability 1.
\end{remark}

\begin{remark}
An alternative way to reconstruct $C(\t,\nu)$ is given in the following theorem, proven in a preceding paper \cite{OPZ}.
It requires the area of the support set to be less than or equal to one.
If this criterion is met, it may be of interest, as it employs a simpler cross-correlation $\EXP\{ \Zak y(\t,\nu) \conj{y(t)}\}$ for the recovery of $C(\t,\nu)$.
\begin{theorem}\label{thm:Zyy}
With the same notations as above, let $\H$ be a WSSUS channel with its scattering function $C(\tau, \nu)$ compactly supported on a set of area smaller than one, and rectified in a sense of \eqref{eq:steta.covering} with $L = 1/ (B T)$.
There exists a vector $c\in \C^L$ such that $C(\tau, \nu)$ can be identified from the received echo $\y(t)$ to the weighted impulse train $x(t)$ using
\begin{equation}\label{eq:Zyy}
\delta(t-\t)\:C(\t,\nu) = \tfrac{1}{B} \: A_c^{-1} \: G_\Gamma^{-1} \: \E{  \Zvec(\t,\nu) \: \conj{ \y(t)}},
\end{equation}
where $A_c\in \C^{L\times L} $ is a non-singular diagonal matrix $A[i,j] = c_{-a_j} \delta_{ij}$ with elements of $c$ on the diagonal ordered corresponding to the enumeration of the covering $\Gamma = \{ (a_j, b_j) \}_{j=1}^L$.
\end{theorem}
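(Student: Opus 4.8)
The plan is to push the \emph{cross}-correlation $\E{\Zvec(\t,\nu)\,\conj{\y(t)}}$ through the mixing identity \eqref{eq:mixing} exactly once, in contrast to the \emph{auto}-correlation that the proof of \autoref{thm:Rz} pushes through it twice. This keeps the recovery linear in $C$, at the price of having to invert the tall mixing matrix $G_\Gamma$ itself rather than the always-invertible diagonal tensor block $K$. The area hypothesis is exactly what buys this: it is the deterministic-identifiability condition of \cite{PfWal}, so $\supp C(\t,\nu)$ admits an $(R,\Gamma)$-rectification \eqref{eq:steta.covering} with $\abs{\Gamma}\le L$ non-aliasing cells, and then $G_\Gamma$, being $L\times\abs{\Gamma}$, can have full column rank. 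So first I would fix such a rectification and obtain from \eqref{eq:mixing}, together with the fact that $\stetavec(\t,\nu)$ is supported on the $\abs{\Gamma}$ coordinates indexed by $\Gamma\bmod L$, the reduced identity $\Zvec(\t,\nu)=G_\Gamma\,\stetavec_\Gamma(\t,\nu)$, where $\stetavec_\Gamma=[\,\steta_{(a_j,b_j)}\,]_j$ collects the nonzero patches. Then I would choose $c\in\C^L$ with the Haar property (so that any $\abs{\Gamma}\le L$ columns of $G$ are linearly independent) and with all entries of modulus one --- for instance drawn uniformly at random from the unit circle, as in \autoref{rem:Kfull-or-not-Kfull} --- so that $G_\Gamma$ has a left inverse $G_\Gamma^{-1}$ and the diagonal matrix $A_c$ is invertible.

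By linearity of the expectation and of the time-frequency mixing, $G_\Gamma^{-1}\,\E{\Zvec(\t,\nu)\,\conj{\y(t)}}=\E{\stetavec_\Gamma(\t,\nu)\,\conj{\y(t)}}$, so everything reduces to the $j$-th entry $\E{\steta_{(a_j,b_j)}(\t,\nu)\,\conj{\y(t)}}$. Into this I would substitute three ingredients: the explicit echo of the weighted delta train, $\y(t)=\sum_{k\in\Z}c_{k\bmod L}\int\steta(t-kT,\mu)\,\epi{\mu t}\,\d\mu$; the definition of the patch $\steta_{(a_j,b_j)}$; and the WSSUS covariance \eqref{eq:def.scat}. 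The frequency Dirac integrates out the dummy variable $\mu$, while the delay Dirac becomes $\sum_{k\in\Z}\conj{c_{k\bmod L}}\,\delta\bigl(t-\t-(a_j+k)T\bigr)$; since $c_{k\bmod L}$ is constant on each residue class modulo $L$, isolating the Dirac mass at $t=\t$ picks out the residue $k\equiv-a_j\pmod L$, i.e.\ the single weight $c_{-a_j}$ sitting on the $j$-th diagonal entry of $A_c$. What remains is a product of $\delta(t-\t)$, that weight, some exponential phases in $(\t,\nu)$ and $j$, and $C(\t+a_jT,\nu+b_jB)$; by design the phase-adjusting factors built into $\steta_{(a,b)}$ and into $\Zee_p$ in \eqref{eq:Zee_p}, together with the modulation inherited from the echo, cancel those exponentials, and the normalization $B^{-1}$ attached to $\Zee_p$ (equivalently, the factor $B$ produced by Poisson summation inside $\Zak\y$) supplies the overall constant. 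Collecting the entries over $j$ gives $\E{\stetavec_\Gamma(\t,\nu)\,\conj{\y(t)}}=B\,\delta(t-\t)\,A_c\,\bigl[\,C(\t+a_jT,\nu+b_jB)\,\bigr]_j$ on $R=[0,T)\times[0,B)$.

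Applying $G_\Gamma^{-1}$, then $A_c^{-1}$, then the scalar $\tfrac1B$, yields $\delta(t-\t)\,\bigl[\,C(\t+a_jT,\nu+b_jB)\,\bigr]_j=\tfrac1B\,A_c^{-1}\,G_\Gamma^{-1}\,\E{\Zvec(\t,\nu)\,\conj{\y(t)}}$ on $R$, which is the componentwise reading of \eqref{eq:Zyy}. Reassembling the $\abs{\Gamma}$ patches through the scattering-function decomposition \eqref{eq:patch} then recovers $C(\t,\nu)$ on all of $[0,\Tmax]\times[0,\Bmax]$, hence identifies $\H$, and the asserted existence of a suitable $c$ is furnished by the Haar-property construction.

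I expect the main obstacle to be the bookkeeping in the middle step: justifying the distributional manipulations rigorously in the sense of \cite{PfaZh03} (a product of two Dirac masses, interchanging the expectation with the defining integral of $\y$, and restricting the delay comb to the fundamental domain), and --- the genuinely delicate point --- verifying that every $(\t,\nu)$- and $j$-dependent exponential, in particular the phase-adjusting factor of $\steta_{(a,b)}$ and the modulation $\epi{-(\nu+b_jB)t}$ inherited from the echo, cancels exactly against the conventions in $\Zee_p$, so that what survives is precisely the diagonal matrix $A_c$ with no residual phase or conjugation mismatch. Everything else is a specialization of the machinery already developed for \autoref{thm:Rz} and \autoref{thm:diagonal}, and the area hypothesis on $\supp C$ enters only once, to secure the rectification with $\abs{\Gamma}\le L$ that makes $G_\Gamma$ (rather than merely $K$) left-invertible.
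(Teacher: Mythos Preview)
The paper does \emph{not} prove \autoref{thm:Zyy}: it is stated inside a remark with the explicit attribution ``proven in a preceding paper \cite{OPZ}'', so there is no in-paper argument to compare against. Your proposal is therefore being assessed on its own merits rather than against the authors' proof.

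Your strategy is the natural one and matches what one would expect the argument in \cite{OPZ} to be: push the cross-correlation through the mixing identity \eqref{eq:mixing} once, use the area-$<1$ hypothesis to guarantee a rectification with $\abs{\Gamma}\le L$ so that $G_\Gamma$ is a tall matrix with full column rank under the Haar property, and then reduce everything to the single-entry computation of $\E{\steta_{(a_j,b_j)}(\t,\nu)\,\conj{\y(t)}}$ via the WSSUS covariance \eqref{eq:def.scat}. The identification of $c_{-a_j}$ as the diagonal entry of $A_c$ coming from the delta at $t=\t$ is correct.

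Two points deserve tightening. First, your phrase ``isolating the Dirac mass at $t=\t$'' hides a genuine restriction: the computation actually produces a full comb $\sum_{k\in\Z}\conj{c_{k\bmod L}}\,\delta\bigl(t-\t-(a_j+k)T\bigr)$, and you must explain why only the $k=-a_j$ term is relevant --- the cleanest route is to restrict $t$ to the fundamental domain $[0,T)$ (just as $(\t,\nu)\in R$), which is implicit in the statement's distributional reading but should be said. Second, the phase bookkeeping you flag as ``the genuinely delicate point'' is indeed where the proof lives or dies: the exponential $\empi{(\nu+b_jB)t}$ from the echo, the phase-adjusting factor $\epi{b_jBT}$ in $\steta_{(a,b)}$, and the normalization $B^{-1}\empi{\nu(\t+pT)}$ in $\Zee_p$ must be tracked together with the conjugation on $\y(t)$ (note that your computation produces $\conj{c_{-a_j}}$ whereas $A_c$ carries $c_{-a_j}$, so either the unimodularity of $c$ or a convention in \cite{OPZ} is doing work here). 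You have correctly identified this as the crux; carrying it out line by line, rather than asserting cancellation ``by design'', is what remains.
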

\end{remark}

\section{Scattering function estimation}\label{ssec:scat_func_est}
In theory, \eqref{eq:C.Rz} and \eqref{eq:Zyy} enable us to reconstruct the scattering function perfectly from the stochastic process $\y(t)$.
But in practice we do not possess exact values of the statistic $\EU[\normal]{ \y(t) \: \conj{\y(t')}}$.
We use ensemble averaging to obtain estimates of the second order statistics of the output.
We note that periodic delta trains have infinite duration and bandlimitation, which presents a challenge to their use for multiple reasons.
The effects of time-gating the signal duration, or using alternative pulse shapes, are explored in the deterministic setting in \cite{KraPfa}.

In our simulations, we deal with periodic delta trains by time-gating the input signal with a sufficiently large window.
We discuss the numerical analysis of our simulation procedures later in \autoref{ssec:discretization}.

\subsection{Estimator construction}\label{sec:estimator-contruction}
To estimate $C(\t,\nu)$, we sound an ensemble of channels with the chosen input signal 
\[
\x(t) = \sum_{k\in\Z} c_k \: \delta(t-k T)
\]
to obtain  $J$ samples of the channel output $\y^{(1)}(t), \dotsc, \y^{(j)}(t)$ corresponding to independent identically distributed samples of the spreading function $\steta^{(j)}(\t,\nu)$.
For every $j=1,\dotsc, J$, we replace the random process $\y(t)$ in \eqref{eq:Zee_p} with samples $\y^{(j)}(t)$ and obtain patches $\Zee_p^{(j)}(\t,\nu)$ for $p=0,\dotsc, L-1,$
\begin{equation}\label{eq:Zee_p^ell}
\Zee_p^{(j)}(\t,\nu)  = B^{-1} \empi{\nu (\t+ p T)} \: \Zak \y^{(j)}(\t+ p T,\nu).
\end{equation}
Out of these, we assemble column function vectors $\Zee^{(j)}(\t,\nu)$ according to  \eqref{eq:Zee_p}.
We estimate the autocorrelation $\Rz(\t,\nu,\t',\nu')$ by averaging
\begin{equation}\label{eq:Dhat}
\Rzhat(\t,\nu,\t',\nu') \coloneqq \frac{1}{J} \sum_{j = 1}^J  \Zvec^{(j)}(\t,\nu) \otimes \conj{\Zvec^{(j)}(\t',\nu')}
\end{equation}
and
\begin{equation}\label{eq:Chat}\begin{split}
\Cvechat(\t,\nu,\t',\nu') &= K^{-1} \: \Rzhat(\t,\nu,\t',\nu') = \frac{1}{J}\sum_{j=1}^J \e^{(j)}(\t,\nu) \otimes \conj{\e^{(j)}(\t',\nu')}.
\end{split}\end{equation}

Clearly, the pointwise statistical properties of the column vector $\Cvechat$ coincide with the properties of an estimator for the scattering function
\[
\Chat(\t,\nu) = \sum_{j=1}^L \Chat_{j,j}(\t - a_j T, \, \nu - b_j B,\,  \t - a_j T,\, \nu - b_j B)
\]
obtained by translating the entries of $\Cvechat$ as in \eqref{eq:patch}.

\subsection{Bias-variance analysis}\label{sec:bias-variance-analysis}
We set $\xi_i \triangleq (\t_i, \nu_i), i=1,\dotsc,4$.
By linearity, from \eqref{eq:Chat} and \eqref{eq:Rz} it follows
\begin{align*}
\EXP \Bigl\{ \Cvechat(\xi_1,\xi_2) \Bigr\} &= \frac{1}{J}\sum_{j=1}^J K^{-1} \: \E{  \Zee^{(j)}(\xi_1)\otimes\conj{\Zee^{(j)}(\xi_2)}} = \Cvec(\xi_1) \: \delta(\xi_1-\xi_2),
\end{align*}
which means that the estimator $\Chat$ is an unbiased estimator of $C$ up to a constant $\delta(\xi_1-\xi_2)$ factor.

We estimate the variance under the simplifying assumption that all individual scatterers $ \steta(\t,\nu)$ are jointly Gaussian complex random variables, and in addition have the \emph{circular symmetry} property, that is, for any $\phi \in \R$, the random variables $e^{i\phi} \steta(\t,\nu)$ have the same distribution as $\steta(\t,\nu)$.

We compute first
\begin{align*}
\E{ \Zee^{(j_1)}(\xi_1)  \otimes\conj{\Zee^{(j_1)}(\xi_2)}  \otimes\conj{\Zee^{(j_2)}(\xi_3)} \otimes \Zee^{(j_2)}(\xi_4)}.
\end{align*}
We start with the case $j_1 = j_2$, omitting the superscripts entirely.
\begin{align*}
\E{ \Zee(\xi_1) \otimes\conj{\Zee(\xi_2)} \otimes \conj{\Zee(\xi_3)} \otimes \Zee(\xi_4)} =(G \otimes \conj{G} \otimes \conj{G} \otimes G) \:  \E{\stetavec(\xi_1) \otimes \conj{\stetavec(\xi_2)} \otimes \conj{\stetavec(\xi_3)} \otimes \stetavec(\xi_4)}
\end{align*}
By Isserlis' moment theorem,
\begin{align*}
\MoveEqLeft \E{\stetavec(\xi_1) \otimes \conj{\stetavec(\xi_2)} \otimes \conj{\stetavec(\xi_3)} \otimes \stetavec(\xi_4)} \\
&=  \Cfull(\xi_1)\: \delta(\xi_1 - \xi_2) \otimes \Cfull(\xi_3)\: \delta(\xi_3 - \xi_4) + \Cfull(\xi_1)\: \delta(\xi_1 - \xi_3) \otimes\Cfull(\xi_2)\: \delta(\xi_2 - \xi_4) \\
\shortintertext{and we shorthand each term}
&\coloneqq  I_1 + I_2.
\end{align*}
Observe that the sparsity condition $(\GG)\: \Cfull (\t,\nu) = K \: \Cvec (\t,\nu)$ implies
\begin{equation}\label{eq:I_1}\begin{multlined}
(G \otimes \conj{G} \otimes \conj{G} \otimes G) \: I_1 = (\conj{K} \otimes K)\: [\Cvec(\xi_1) \otimes \Cvec(\xi_3)] \: \delta(\xi_1 - \xi_2) \: \delta(\xi_3 - \xi_4)
\end{multlined}\end{equation}
and similarly for $I_2$.

The case $j_1 \neq j_2$ is trivial, since by independence of $\Zee^{(j)}(\t,\nu)$ for different $j$'s,
\begin{align*}
\MoveEqLeft \EU{\Zee^{(j_1)}(\xi_1)\otimes\conj{\Zee^{(j_1)}(\xi_2)}} \otimes \EU{\conj{\Zee^{(j_2)}(\xi_3)} \otimes \Zee^{(j_2)}(\xi_4)}\\
&=(G \otimes \conj{G} \otimes \conj{G} \otimes G) \: \brackets*{ \Cfull(\xi_1) \otimes \Cfull(\xi_3) } \: \delta(\xi_1 - \xi_2) \: \delta(\xi_3 - \xi_4) \\
&=(\conj{K} \otimes K) \: \brackets*{ \Cvec(\xi_1) \otimes \Cvec(\xi_3)} \: \delta(\xi_1 - \xi_2) \: \delta(\xi_3 - \xi_4).
\end{align*}

We can now gather all of the above equations and calculate
\allowdisplaybreaks
\begin{align*}
\MoveEqLeft[1] \Vec\,\Var{\Chat(\xi_1, \xi_2,\xi_3,\xi_4)}\\
&= \EU[\Big]{\Cvechat(\xi_1,\xi_2)\otimes \conj{\Cvechat(\xi_3,\xi_4)}} - \EU[\Big]{\Cvechat(\xi_1,\xi_2)}\otimes \conj{\EU[\Big]{\Cvechat(\xi_3,\xi_4)}} \\
&= (\conj{K}^{-1} \otimes K^{-1}) \frac{1}{J^2} \Bigg\lparen \sum_{j_1= j_2=1}^J + \sum_{j_1\neq j_2 = 1}^J \EU[\Big]{\Zvec^{(j_1)}(\xi_1) \otimes \conj{\Zvec^{(j_1)}(\xi_2)} \otimes \conj{ \Zvec^{(j_2)}(\xi_3)} \otimes \Zvec^{(j_2)}(\xi_4)}\Bigg\rparen \\
&\qquad - \brackets*{ \Cvec(\xi_1) \otimes  \Cvec(\xi_3) }\: \delta(\xi_1-\xi_2)  \: \delta(\xi_3-\xi_4) \\
&= (\conj{K}^{-1} \otimes K^{-1}) \frac{1}{J} \Big\lparen (\conj{ \conj{G} \otimes G) } \otimes (\conj{G} \otimes G) (I_1+I_2) + (J-1) (\conj{K} \otimes K ) \: \brackets*{ \Cvec(\xi_1) \otimes \Cvec(\xi_3) } \: \delta(\xi_1 - \xi_2) \: \delta(\xi_3 - \xi_4) \Big\rparen \\
&\qquad  -   \brackets*{ \Cvec(\xi_1) \otimes  \Cvec(\xi_3) }\: \delta(\xi_1-\xi_2)  \: \delta(\xi_3-\xi_4) \\
\shortintertext{which, by \eqref{eq:I_1}, becomes}
&= \frac{1}{J}\Bigg\lparen \brackets*{\Cvec(\xi_1) \otimes \Cvec(\xi_3)} \,\delta(\xi_1 - \xi_2) \: \delta(\xi_3 - \xi_4) \\
&\qquad + \brackets*{\Cvec(\xi_1) \otimes \Cvec(\xi_2)}\: \delta(\xi_1 - \xi_3) \: \delta(\xi_2 - \xi_4)  + (J-1)  \: \brackets*{ \Cvec(\xi_1) \otimes \Cvec(\xi_3)} \: \delta(\xi_1 - \xi_2) \: \delta(\xi_3 - \xi_4) \Bigg\rparen \\
&\qquad  -   \brackets*{ \Cvec(\xi_1) \otimes  \Cvec(\xi_3) }\: \delta(\xi_1-\xi_2)  \: \delta(\xi_3-\xi_4).
\end{align*}
Cancel like terms, and take absolute value.
Only the second summand remains,
\begin{align*}
\MoveEqLeft \abs*{\Vec\Var{\Chat(\xi_1, \xi_2,\xi_3,\xi_4)}} \leq \frac{1}{J} \abs*{\Cvec(\xi_1) \otimes \Cvec(\xi_2)}\: \delta(\xi_1 - \xi_3) \: \delta(\xi_2 - \xi_4).
\end{align*}
This gives us a pointwise estimate. We now integrate over $R^4 = ([0,T)\times[0,B))^4$ to obtain the following bound for the average variance of $\widehat{C_j}$,
\begin{equation}\begin{split}
\frac{1}{\vol R^4} \norm[\Big]{\Var \Cvechat  (\xi_1, \xi_2, \xi_3, \xi_4) }_{L^1} &= \frac{1}{\vol R^4} \norm[\Big]{\Vec\norm{\Var{\Chat(\xi_1, \xi_2,\xi_3,\xi_4)}}_{L^1} }_{\ell^1}\\
&\leq \frac{(BT)^2}{(BT)^4 J} \norm{C(\t,\nu)}_{L^1(\R^2)}^2 = \frac{2L^2}{J} \, \norm{C(\t,\nu)}_{L^1(\R^2).}^2
\label{eq:var.bound}
\end{split}\end{equation}
Hence, the variance of the scattering function estimator $\Chat$ decays inversely proportional to the number of soundings $J$, which is confirmed by the simulation results seen on \autoref{fig:MSE.two}.

\section{Simulation results}\label{sec:simulation}
According to \autoref{thm:Rz}, it is possible to recover the continuous-time scattering function $C(\t,\nu)$ from the \acorr of the output $A(t,t')$.
To test this proposition numerically, we need to address two issues.

First, the sounding signal proposed in \autoref{thm:Rz} is a generalized function with infinite duration.
In fact, to identify an arbitrary channel in $\StOPW(S)$ in full, any identifier necessarily cannot decay either in time or in frequency, as shown in the special case of  deterministic operators in \cite{KraPfa}.
This shortcoming can be resolved if we restrict our interest to the operators' actions only on signals that are approximately time-frequency localized to a set of the time-frequency plane. 
For simplicity, we consider discretization of deterministic channels first.

Recall that a Gabor system $(g, \alpha \Z \times \beta \Z) \triangleq \{M_{\beta n}  T_{\alpha m}g\}_{m,n\in\Z}$ is a \emph{Parseval} frame if for all square integrable signals $x(t)$ we have
\[
\sum_{m,n\in\Z } \abs{\ip{x, M_{\beta n}  T_{\alpha m} g}}^2 = \int |x(t)|^2\,dt.
\]

\begin{definition}\label{defn:approx.tf.localized}
\cite[Definition 3.2]{KraPfa} Let $(g, \alpha \Z \times \beta \Z)$ be a Parseval frame with $g\in \S(\R)$ a function from a Schwartz space of rapidly decreasing functions.
A square integrable signal $x(t)$  is $\eps$-\emph{time-frequency localized} to a compact set $S$, if 
\[
\sum_{\substack{m,n\in\Z\\ (\alpha m,\beta n)\in  S}} \hspace{-.5cm} \abs{\ip{x,  M_{\beta n}  T_{\alpha m}g}}^2 \geq (1-\eps) \sum_{m,n\in\Z } \abs{\ip{x, M_{\beta n}  T_{\alpha m} g}}^2.
\]
\end{definition}

The following result establishes a criterion for two operators being almost indistinguishable in their action on signals time-frequency localized to a set $S$.

\begin{theorem}\cite[Theorem 3.3]{KraPfa} \label{thm:tf-localization} For a Parseval frame $(g, \alpha \Z \times \beta \Z)$ with a Schwartz class window $g$, there exists $C>0$ and for any precision level $\eps>0$ an $r=r(\epsilon)>0$  such that whenever there exists a region $S$ for which
\begin{align*}%\label{eqn:condition1}
\sup_{(t,f)\in \R^2} \abs{\sigma(t,f)} \leq \mu, \quad  \sup_{ (t,f)\in \R^2 } \abs{\widetilde\sigma(t,f)}\leq \mu
\end{align*}
and
\begin{align*}%\label{eqn:condition2}
\sup_{(t,f) \in S+\Ball{r(\eps)}} \abs{\sigma(t,f)-\widetilde\sigma(t,f)} \leq \eps \mu,
\end{align*}
then for any  signal $x\in L^2(\R)$ that is $\eps$-time-frequency localized to $S$, we have 
\[
\norm{H x-\widetilde H x}_{L^2(\R)} \leq C\eps\mu \norm{x}_{L^2}. 
\]
where $H$ and $\widetilde{H}$ are channels with $\sigma$ and $\widetilde{\sigma}$ as their Kohn-Nirenberg symbols, respectively. 
\end{theorem}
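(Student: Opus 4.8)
The plan is to pass to the difference operator $E\coloneqq H-\widetilde H$, whose Kohn--Nirenberg symbol is $\rho\coloneqq\sigma-\widetilde\sigma$, and to bound $\norm{Ex}_{L^2}$ by expanding $x$ in the very Gabor frame that defines $\eps$-time--frequency localization in \autoref{defn:approx.tf.localized}. From the hypotheses, $\abs{\rho}\leq 2\mu$ everywhere on $\R^2$, while $\abs{\rho}\leq\eps\mu$ on the enlarged region $S_r\coloneqq S+\Ball{r}$. Since $(g,\alpha\Z\times\beta\Z)$ is a Parseval frame I would write $x=\sum_{m,n}c_{m,n}\,M_{\beta n}T_{\alpha m}g$ with $\sum_{m,n}\abs{c_{m,n}}^2=\norm{x}_{L^2}^2$, and split $x=x_{\mathrm{in}}+x_{\mathrm{out}}$ according to whether $(\alpha m,\beta n)\in S$ or not. \autoref{defn:approx.tf.localized} bounds the leaked coefficient energy by $\eps\norm{x}_{L^2}^2$, and because the synthesis operator of a Parseval frame is a contraction, this gives $\norm{x_{\mathrm{out}}}_{L^2}\leq\sqrt\eps\,\norm{x}_{L^2}$ and $\norm{x_{\mathrm{in}}}_{L^2}\leq\norm{x}_{L^2}$.

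For the far part I would invoke an a priori bound $\norm{E}_{L^2\to L^2}\leq c\mu$. In the radar setting the admissible symbols are (symplectic) Fourier transforms of spreading functions supported in a fixed bounding box, so all of their derivatives are controlled and a Calder\'on--Vaillancourt-type estimate (equivalently, Gr\"ochenig's almost-diagonalization of pseudodifferential operators by Gabor frames) applies with a constant depending only on $\alpha$, $\beta$, $g$ and that box. Hence $\norm{Ex_{\mathrm{out}}}_{L^2}\leq c\mu\norm{x_{\mathrm{out}}}_{L^2}\leq c\sqrt\eps\,\mu\norm{x}_{L^2}$.

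The substance lies in the near part. Using the covariance of the Kohn--Nirenberg calculus under time--frequency shifts, $T_\tau^{*}M_\nu^{*}E\,M_\nu T_\tau$ has symbol $\rho(\,\cdot\,+(\tau,\nu))$, so for each lattice point $\lambda=(\alpha m,\beta n)\in S$ the relevant shifted symbol is $\leq\eps\mu$ on $\Ball{r}$, precisely where $g$ is concentrated. I would make this quantitative by writing $\rho=\rho_1+\rho_2$ with a smooth cutoff equal to $1$ on $S+\Ball{r/2}$ and supported in $S_r$: then $\norm{\rho_1}_\infty\leq\eps\mu$, so $\norm{E_1x_{\mathrm{in}}}_{L^2}\lesssim\eps\mu\norm{x}_{L^2}$ by the same operator-norm estimate applied to a symbol of size $\eps\mu$; and $\rho_2$ vanishes within distance $r/2$ of $S$, so the Gabor-matrix entries $\ip{E_2\,M_{\beta n}T_{\alpha m}g,\ M_{\beta n'}T_{\alpha m'}g}$ with $(\alpha m,\beta n)\in S$ are dominated by $\norm{\rho_2}_\infty$ times the Schwartz tail of $g$ outside $\Ball{r/2}$. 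Choosing $r=r(\eps)$ large enough drives that tail below $\eps$ uniformly in the lattice point, and a Schur test on the residual off-diagonal decay (also coming from $g\in\S(\R)$) yields $\norm{E_2x_{\mathrm{in}}}_{L^2}\lesssim\eps\mu\norm{x}_{L^2}$. Collecting terms gives $\norm{Ex}_{L^2}\lesssim(\sqrt\eps+\eps)\mu\norm{x}_{L^2}$; the linear-in-$\eps$ form of the conclusion then follows up to adjusting constants (e.g.\ by imposing the localization at level $\eps^2$).

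The hard part will be this inner estimate: one must pick $r(\eps)$ so that the Schwartz mass of $g$ leaking past $\Ball{r(\eps)}$ is negligible relative to $\eps$ \emph{uniformly over all lattice points}, and then upgrade the resulting entrywise smallness of the Gabor matrix of $E$ to a genuine operator-norm bound via almost-orthogonality, keeping every constant independent of the arbitrary (possibly large) set $S$. The operator-norm inputs for $E$ and $E_1$ are the other delicate point, as they need the symbol class to be regular enough for a Calder\'on--Vaillancourt bound; in the present model this is guaranteed by the compact spreading support built into the channel.
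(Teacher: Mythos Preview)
The paper does not contain a proof of this theorem. It is stated with an explicit citation to \cite[Theorem~3.3]{KraPfa} and used as a black box to justify the discretization in \autoref{ssec:discretization}; no argument for it appears anywhere in the present paper. Consequently there is no ``paper's own proof'' to compare your proposal against.

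On the substance of your sketch: the Gabor-frame splitting $x=x_{\mathrm{in}}+x_{\mathrm{out}}$ and the choice of $r(\eps)$ to kill Schwartz tails of $g$ are the natural ingredients, and the almost-diagonalization/Schur-test mechanism you describe is indeed how such statements are typically proved. Two caveats are worth flagging. First, the a~priori bound $\norm{E}_{L^2\to L^2}\leq c\mu$ you invoke for the far part does \emph{not} follow from $\sup\abs{\rho}\leq 2\mu$ alone; bounded Kohn--Nirenberg symbol does not imply $L^2$-boundedness, and you yourself note that a Calder\'on--Vaillancourt input needs extra symbol regularity. The theorem as stated here carries no such hypothesis, so either the original in \cite{KraPfa} has an additional assumption (bandlimited symbol, $M^{\infty,1}$ membership, or the like) that was elided in this paper's restatement, or the proof must avoid any global operator-norm bound on $E$ and work entirely through the Gabor matrix entries $\ip{E\,\pi(\lambda)g,\pi(\lambda')g}$, where the Schwartz decay of $g$ supplies the needed integrability. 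Second, your route yields $\sqrt\eps$ rather than $\eps$ from the $x_{\mathrm{out}}$ piece; the fix you propose (relabel $\eps\mapsto\eps^2$) changes the dependence of $r$ on $\eps$ and of the localization level on $\eps$, so it is not quite a free renaming---it alters the statement. Whether the linear-in-$\eps$ conclusion is genuinely attainable, or whether the version in \cite{KraPfa} already carries this cost, would have to be checked against that source.
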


In most applications we are only interested in the operator's action within a given time segment and a given frequency band. Hence, the above allows
us to replace an operator in $OPW(M)$ by an approximant in a finite dimensional space. Moreover, the result allows us to replace the identifier, a
tempered distribution, by a smooth square integrable function, as we shall see in the following.
 
The restriction to a finite-dimensional set of scattering functions and the attempt to recover $\steta(\tau,\nu)$ or $C(\t,\nu)$ from a finite length
vector obtained from the \acorr of the response $\y(t) = \H x(t)$ is common in the literature.  Frequently, the compact support condition of
$\eta(\tau,\nu)$ is put aside and a bandlimitation is introduced instead in order to justify a representation of $\eta(\tau,\nu)$ based on regular
samples, which are then taken on the original support of $\eta(\tau,\nu)$. The results in \cite{KraPfa} allow us to provide below a thorough
theoretical justification for this approach; they illustrate that the high frequency content of $\eta(\tau,\nu)$ on its support is only relevant if we
are interested in the operators' actions on a large segment of the time-frequency plane.

%%%%%%%%%%%%%%%%%%%%%%%%%%%%%%%%%%%%%%%
%%%%%%%%%%%%%%%%%%%%%%%%%%%%%%%%%%%%%%% 	DISCRETIZATION
%%%%%%%%%%%%%%%%%%%%%%%%%%%%%%%%%%%%%%%
\subsection{Discretization of the target}\label{ssec:discretization}
\newcommand{\Mup}{\mathsf{M}}
\newcommand{\Nup}{\mathsf{N}}

In the following, we consider the action of $H\in \OPW( [0, T ] \times [-\tfrac B 2,\tfrac B 2])$ on functions $\eps$-time-frequency localized on $S=[0, U ]\times [-\frac V 2, \frac V 2]$.  
It is important to keep track of the distinction between the spreading support set limitation $[0,T]\times [-\tfrac{B}{2}, \tfrac{B}{2}]$ that in applications would come from the properties of the targets, and the time-frequency localization $[0, U], \times [-\tfrac{V}{2}, \tfrac{V}{2}]$, which would be determined by the physical restrictions on the sounding facilities.

Computations in this section do not require $T B=\tfrac 1 L$ with $L\in \N$; in fact, $T B$ may be of arbitrary
size, for example, we can pick $T = \Tmax, B=\Bmax$, with $\Tmax\Bmax=L\in\N$.%  
\footnote{Later, in order to apply operator sampling theory, it shall become useful to consider $H$ to be of the form $H=H_1+ \cdots +H_L$, where $H_j\in \OPW( [k_jT,T+k_jT ] \times [-\tfrac B 2+\ell_j B,\tfrac B 2+\ell_j B])$, with $TB=\tfrac 1 L$.
The following discretization arguments generalize trivially to this setting.}

Whenever the scattering function $\eta(\t,\nu)$ is compactly supported, indicated by $H\in \OPW( [0, T ] \times [-\tfrac B 2,\tfrac B 2])$, the
operator's Kohn-Nirenberg symbol $\sigma$ is bandlimited, admitting the expansion
\begin{align*}
%\label{eq:eta-discretized}
\sigma(t,f) &=   e^{-\pi iT f} \sum_{m\in\Z}  \sum_{n\in\Z} \sigma(\tfrac m B,\tfrac n T) \sinc (B t - m)\, \sinc (T f - n).
\end{align*}

\newcommand{\chisym}{\overline{\chi}}
\newcommand{\Ueps}{\mathcal{U}_\eps}
\autoref{thm:tf-localization} shows that $H$ is almost indistinguishable from $\widetilde H$ with Kohn-Nirenberg symbol 
\begin{align*}
\widetilde \sigma(t,f) =   e^{-\pi iT f } \, \chisym(\tfrac f {V_\eps}) \sum_{\tfrac m B \in U_\eps}  \sum_{n\in\Z} \sigma(\tfrac m B,\tfrac n T) \sinc (B t - m)\, \sinc (T f - n)
\end{align*}
for functions $\eps$-time-frequency localized on $S=[0, U ]\times [-\frac V 2, \frac V 2]$.  Here, $V_\eps \triangleq V+2 r (\eps)$ and $\Ueps \triangleq
[-r(\eps), U+r(\eps)]$ where $r=r(\eps)$ is given by \autoref{thm:tf-localization}.  Possibly enlarging $V_\eps$ allows us to assume
$V_\eps/B=K\in\N$.

Observe that 
\begin{align*}
\widetilde H x(t) &= \int  \widetilde \sigma(t,f)\, \widehat x(f)\,e^{2\pi i t f} \d f, \\
\widetilde \sigma(t,f) &=   e^{-\pi iT f } \chi(\tfrac f {V_\eps}) \sum_{\tfrac m B \in \Ueps}  \sum_{n\in\Z} \sigma(\tfrac m B,\tfrac n T) \sinc (B t - m)\, \sinc (T f - n).
\end{align*}
implies that $\widetilde Hx(t)=\widetilde H  (V_\eps\sinc(V_\eps \, \cdot)\ast x)(t)$ for all $x$, so we can assume that all input signals $x$ are bandlimited to  $[-\frac {V_\eps} 2, \frac {V_\eps} 2]$. Moreover, since we are only  only interested to study the operator's action for functions localized in time to  $[0,U]$, it suffices to consider the operator on signals of the form
\begin{equation}
\label{eq:tf-localized-signal}
x(t) =   \sum_{\tfrac k {V_\eps}\in[0,U]}  x(\tfrac{k}{V_\eps}) \, \sinc{V_\eps t -  k}.
\end{equation}
 
Taking an inverse Fourier transform in $f$ implies that the time-varying impulse response of $\widetilde H$ is given by
\begin{align*}
\widetilde h(t,\tau) =  \sum_{\tfrac{m}{B} \in \Ueps}   \sum_{n\in\Z} \widetilde h (\tfrac{m}{B},\tfrac{n}{V_\eps}) \sinc (B t - m)\, \sinc (V_\eps \tau - n).
\end{align*}
Oversampling gives us the alternative expansion 
\begin{align*}
\widetilde{h}(t,\tau) = 
\sum_{\tfrac{m}{V_\eps} \in \Ueps} \sum_{n\in\Z} \widetilde h (\tfrac{m}{V_\eps},\tfrac{n}{V_\eps}) \sinc ({V_\eps} t - m)\, \sinc (V_\eps \tau - n).
\end{align*}

Let us now compute the echo of the target with the instant response function $\widetilde h$ to $x(t)$ of the kind \eqref{eq:tf-localized-signal},
\[
y(t) = \widetilde H   x(t) = \int \widetilde h(t, \t)\,   x (t-\t) \d\t,
\]
that is, 
\begin{align*} 
y(t)                    & =  \adjustlimits \sum_{\tfrac m B \in \Ueps} \sum_{\tfrac k {V_\eps}\in[0,U]} \sum_{n\in\Z} \widetilde h (\tfrac m  B,\tfrac n {V_\eps})\, x(\tfrac k {V_\eps})\, \sinc ( B t - m)\,\int \sinc (V_\eps (t - \t) - r) \sinc (V_\eps \tau - n) \d \t
\end{align*}
Note that on $\frac 1 {B} \Z$, we have due to orthogonality of the $\sinc$s,
\begin{align*}\label{eq:output}
y(\tfrac \ell {B})      & = \adjustlimits \sum_{\tfrac{m}{B} \in \Ueps} \sum_{\tfrac{k}{V_\eps}\in[0,U]} \sum_{n\in\Z}  \widetilde h (\tfrac{m}{B},\tfrac{n}{V_\eps})\, x(\tfrac k {V_\eps})\, \sinc ( \ell - m)\, \int \sinc (\ell K -V_\eps  \t - k) \sinc (V_\eps \tau - n) \d \t                                                                                                               \\
                        & = \sum_{\tfrac m B \in \Ueps} \sum_{\tfrac k {V_\eps}\in[0,U]} \sum_{n\in\Z}  \widetilde h (\tfrac{m}{B},\tfrac n {V_\eps})\, x(\tfrac k {V_\eps})\, \delta_{ \ell - m}\, \delta_{\ell K-k -n} \\
                        & =\sum_{\tfrac k {V_\eps}\in[0,U]}  \widetilde h (\tfrac {\ell }  {B},\tfrac {\ell K-k} {V_\eps}) \; x(\tfrac k {V_\eps})\,  
\end{align*}
for $ \tfrac \ell B \in \Ueps$, and 0 else, and, based on the oversampling alternative,
\begin{align*} 
y(\tfrac \ell {V_\eps}) & =\sum_{\tfrac k {V_\eps}\in[0,U]}  \widetilde h (\tfrac {\ell }  {V_\eps},\tfrac {\ell -k} {V_\eps}) \; x(\tfrac k {V_\eps})\,  
\end{align*}
for $ \tfrac \ell {V_\eps} \in \Ueps$, and 0 else.
We therefore obtained a fully discrete model that is amenable to computer simulation,
\begin{equation}
\label{eq:y-discretized}
y_\ell \triangleq  \sum_{k\in[0,U V_\eps]}  \, h_{\ell , \ell-k}\,  x_k\end{equation}
for $\ell \in V_\eps\, \Ueps$.\footnote{Note that the values $h_{\ell , k}$ cannot be chosen arbitrarily.  In fact, they are determined in full by the values $h_{K\ell , k}$.}

To conclude, first note that this finite-dimensional model corresponds to the operator with time-varying impulse response 
\begin{align*}
\widetilde h(\tau,\nu) = \sum_{\tfrac m B \in \Ueps} \sum_{\tfrac n {V_\eps}\in[0,U]} \widetilde h (\tfrac m  {B}  ,\tfrac n {V_\eps}) \sinc (B t - m)\, \sinc (V_\eps \tau - n) 
\end{align*}
and spreading function
\begin{align*}
\widetilde \eta(\tau,\nu)= \sum_{\tfrac m B \in \Ueps} \sum_{\tfrac n {V_\eps}\in[0,U]} \widetilde h (\tfrac m  {B}  ,\tfrac n {V_\eps}) e^{-2\pi i \nu m/B}\chi(\tfrac \nu {B} )\, \sinc (V_\eps \tau - n). 
\end{align*}
Both are fully described by the (independent) samples 
\begin{align}
\eta_{k,\ell}             & = \widetilde \eta(k\Delta\tau, \ell \Delta \nu)
\end{align}
where $\Delta\tau= (U+2r(\eps))^{-1}$, $\Delta\nu= V_\eps^{-1}= (V+2r(\eps))^{-1}$, and integers $k$ and $\ell$ satisfying $k\Delta\tau \in [0,U]$, $\ell\Delta\nu \in [-\frac B 2, \frac B 2]$, respectively. 

Since $\widetilde \eta$ is obtained by convolving $\eta(\tau,\nu)$ with $V_\eps \sinc(V_\eps \tau)$ in $\t$, the relevant samples of $\eta(\tau,\nu)$ in $\tau$ are not restricted to $[0,T]$, but they are expected to decay away from $[0,T]$.  
As is customary, we shall consider operators (or sums thereof) with $\eta_{k,\ell} \neq 0$ only for $k\Delta\tau \in [0,T]$ and $\ell\Delta\nu \in [-\frac B 2, \frac B 2]$.

Let us emphasize that the grid widths $\Delta\tau$ and $\Delta\nu$ correspond to the size the area $[0,T]\times [-\tfrac B 2 , \tfrac B 2]$ that we are modeling the operator on, as well as on the allowed modeling error $\eps$.

Note that as
\begin{align*}
\widetilde H x(t) &= \int  \widetilde \sigma(t,f)\, \widehat x(f)\,e^{2\pi i t f} \d f \\
& = \int  \widetilde \sigma(t,f)\, \big(\widehat x(f)\chi(\tfrac{f}{V_\eps})\big)\,e^{2\pi i t f}\d f,
\end{align*}
we have $\widetilde Hx(t)=\widetilde H  (V_\eps\sinc(V_\eps \, \cdot)\ast x)(t)$ for all $x$. 
In particular, we can replace 
$\Shah(t) =   \sum_{k\in\Z}  c_{k  }  \delta_{kT} (t) $ by the smooth  identifier
\[
\Shah'(t) = V_\eps \sum_{k\in\Z}  c_{k  } \, \sinc{V_\eps t -  {V_\eps k T}}.
\]
Since we are only interested to study the operator's action for functions on $[0,U]$ in time, we have the added benefit of choosing the realistic
approximately time-gated identifier
\[
\widetilde \Shah(t) = V_\eps \sum_{\tfrac k {V_\eps}\in[0,U]}  c_{k} \, \sinc{V_\eps (t -  k T)}.
\]

In order to efficiently discretize the identifier $x(t)=\widetilde \Shah(t)$ by sampling on $(V_\eps)^{-1} \Z$, we have to postulate that $T \in {V_\eps}^{-1} \Z$ which is equivalent to $T V_\eps=\widetilde K \in\Z$.
Indeed, under this assumption, the samples of $\widetilde \Shah(t)$ form the
sequence $\{d_\ell\}$, which is nonzero only if $\frac \ell {V_\eps} = kT$ for some $\tfrac k {V_\eps}\in[0,U]$, which necessitates $\ell \in \widetilde
K\Z$.\footnote{Observe that $\tfrac {V_\eps} B=K$ and $TV_\eps=\widetilde K \in\Z$ imply that $TB=\tfrac {\widetilde K} K \in\mathbb Q$.}

In the stochastic case, the parameters determining the spreading function are random variables.
By the WSSUS assumption, its samples must be uncorrelated in both variables, giving rise to the discrete scattering function $C_{k,\ell}$ defined via
\begin{equation}\label{eq:discrete.WSSUS}
\EU*{\steta_{k,\ell}\: \conj{\steta_{k',\ell'} }} \coloneqq C_{k,\ell} \: \delta[k-k'] \: \delta[\ell-\ell'].
\end{equation}

We can now attempt to recover the discrete scattering function $C_{k,\ell}$ from the autocorrelation $\EU*{\y_\ell\, \conj{\y_{\ell'}}}$ of the finite
discrete echo $\y_\ell$ to the (finite) discrete sinc train $\widetilde\Shah$,
\begin{equation}
\label{eq:discrete.output}
\y_\ell \triangleq  \sum_{k\in[0,U V_\eps]}  \h_{\ell , \ell-k}\, d_k , \quad \ell \in V_\eps \, \Ueps,
\end{equation}
where the random variables $\h_{\ell,k}$ are determined by $\steta_{k,\ell}$ as described above.

\subsection{Test channels}\label{ssec:test.channels}
We assume $\H$ to be a time-varying target with a random spreading function $ \steta(\t,\nu)$ supported within a bounding rectangle $[0,\Tmax] \times
[0, \Bmax]$ with $\Tmax = \SI{1}{\s}$ and $\Bmax = \SI{3}{\Hz}$ of area three.  In accordance with the arguments in the previous subsection, we set
the scattering function $C(\t,\nu)$ (and hence, $\steta(\t, \nu)$) to zero outside of the bounding rectangle where necessary, and discretize it in
accordance with \eqref{eq:y-discretized}, letting $\steta_{k,\ell}$ to represent the target completely. 

\begin{figure}[ht]
\centering
\subfloat[$C_2(\t,\nu)$]{
\includegraphics[width=0.5\columnwidth]{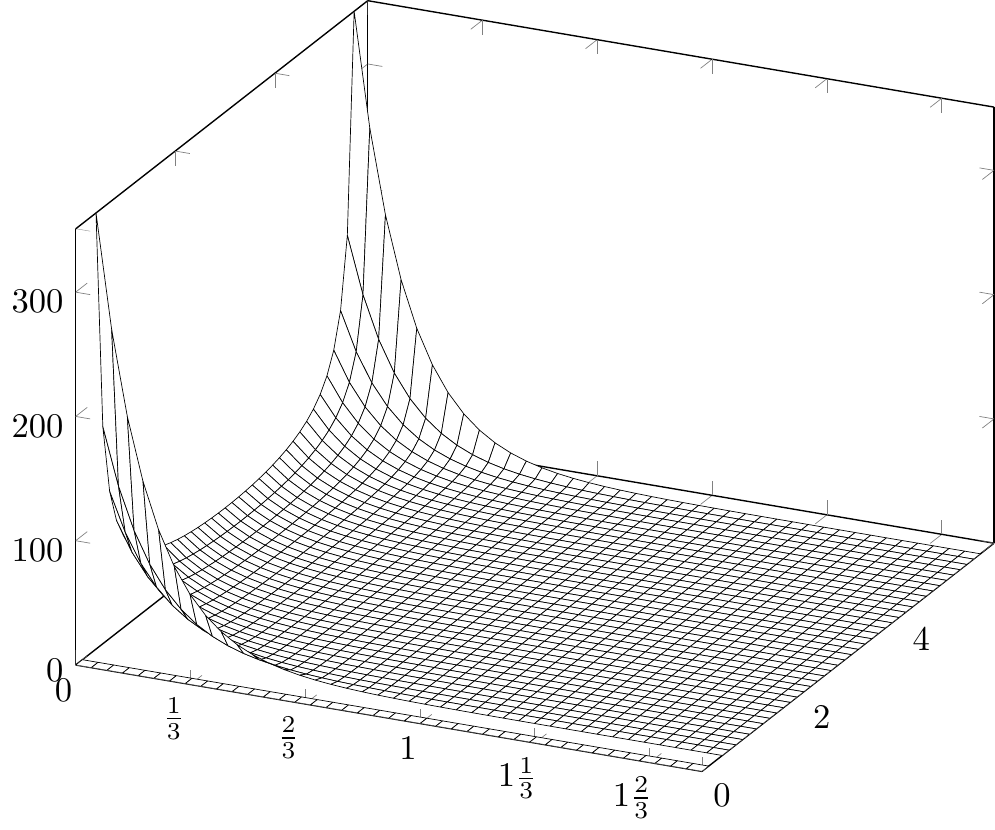}
% \includestandalone{tikz-C2-birdseye}
\label{fig:C2-birdseye}}
\\
\subfloat[$C_3(\t,\nu)$]{
\includegraphics[width=0.5\columnwidth]{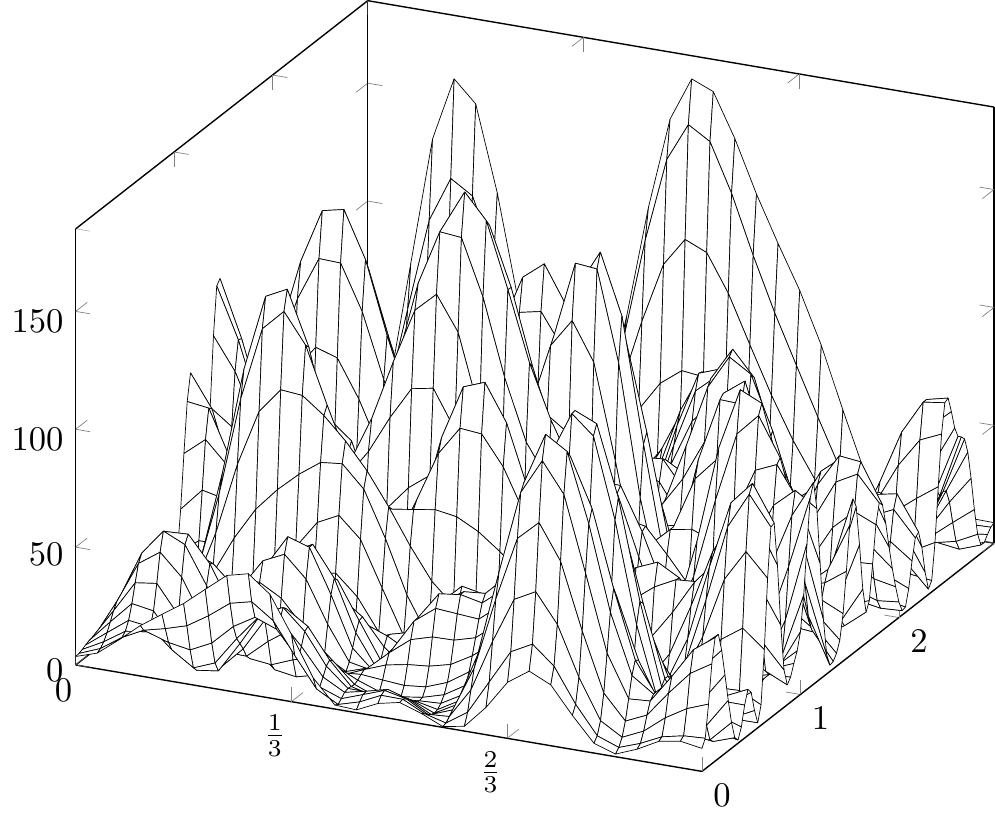}
% \includestandalone{tikz-C3-birdseye}
\label{fig:C3-birdseye}}
\caption{Birdseye view of scattering function in exponential-Jakes and a trigonometric polynomial with random coefficients models}
\end{figure}
%
%%%%%%%%%%%%%%%%%%%%%%%%%%%%%%%%%%%%%%%%%%%%%%%%%%%%%%%%%%%%
\begin{figure}[ht]%
\centering
\subfloat[Birdseye view of $C_1(\t,\nu)$]{
\includegraphics[width=0.5\columnwidth]{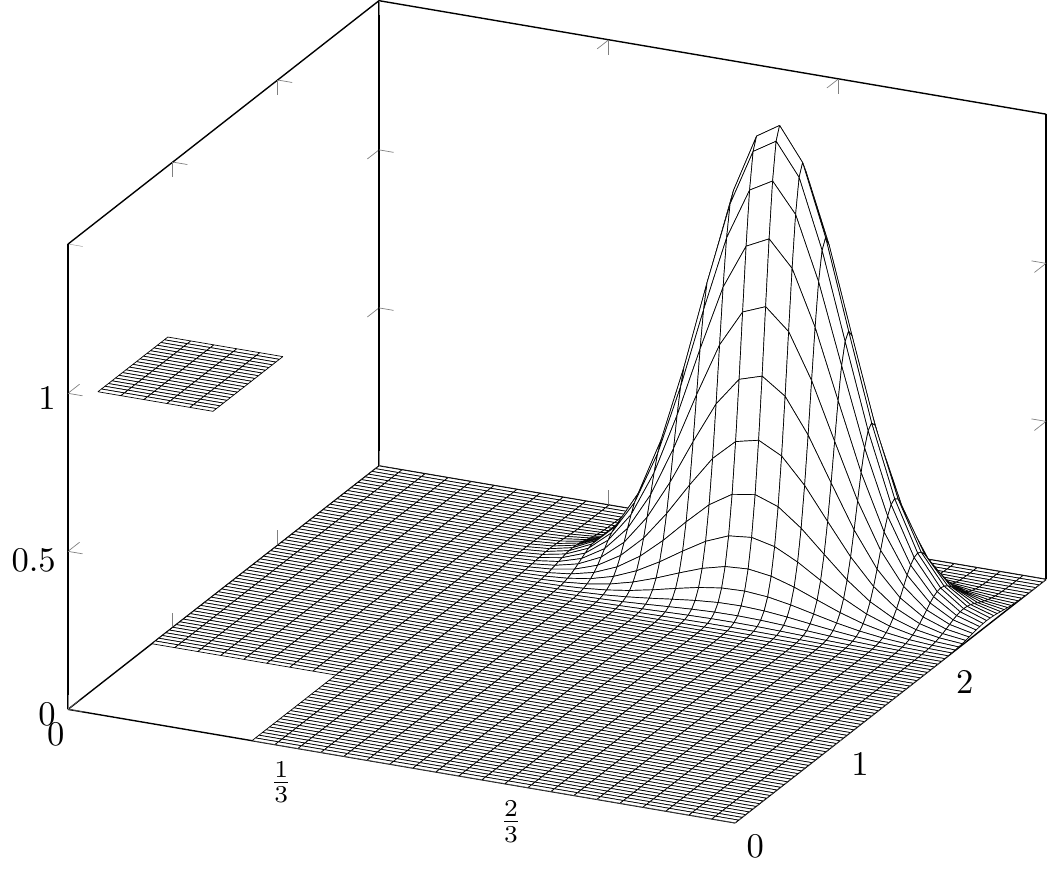}
% \includestandalone{tikz-C1-birdseye}
\label{fig:C1-birdseye}}
% \hfill%
\\
\subfloat[Heatmap view of $C_1(\t,\nu)$, with rectification of size $L=3$, with three active boxes selected]{
\includegraphics[width=0.5\columnwidth]{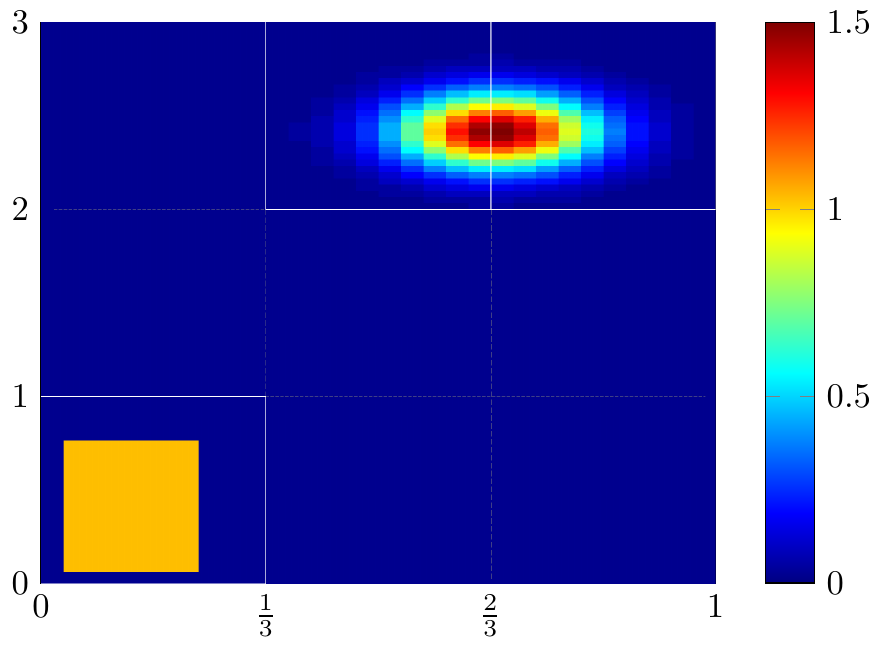}
% \includestandalone{tikz-C1-heatmap}
\label{fig:C1-heatmap}
}%
\caption{Model scattering function $C_1(\t,\nu)$.}
\end{figure}

For our simulation, we test the following models for the behavior of the scattering function of $\H$.
\begin{enumerate}
\item When $\H$ exhibits an approximate two-path propagation property, its scattering function may be modeled, for example, a sum of a Gaussian and a box function, shown in 3D on \autoref{fig:C1-birdseye} and as a heatmap on \autoref{fig:C1-heatmap},
\begin{equation}\label{eq:C.1}
C_1(\t,\nu) \triangleq \chi_{\rect}(\t, \nu) + \mathcal{N}(\mu_{\t}, \mu_{\nu}; \sigma_{\t}, \sigma_{\nu}),
\end{equation}
where the box is at $\rect = [0.12, 0.4]\times[0.1, 0.35]$ and the Gaussian has parameters $\mu_{\t} = 0.66, \mu_{\nu} = 2.4, \sigma_{\t} = 0.07, \sigma_{\nu} = 0.13$.
\item $\H$ follows a Jakes-exponential model as in \cite{Boashash},
\begin{equation}\label{eq:C.2}
C_2(\t,\nu) \triangleq \tfrac{1}{\rho^2} Q(\nu)\,  P(\t)
\end{equation}
with delay power profile
\[
P(\t) = \tfrac{\rho^2}{\tau_0} \, e^{-\t/{\tau_0}} \, \chi(\t/\Tmax)
\]
and Doppler power profile
\[
Q(\nu) = \tfrac{\rho^2}{\pi \sqrt{\Bmax^2 - \nu^2}} \, \chi(\nu/\Bmax).
\]
A scattering function with parameters $\rho=14$ and $\tau_0 = 0.3$, shifted to the first quadrant, is shown on \autoref{fig:C2-birdseye}.

\item $\H = \H''$ is a target with the spreading function given exactly by 
\begin{equation}
\label{eq:C.3}
C_3(\t,\nu) = \absq[\Big]{\sum_{m=-V/2}^{V/2} \sum_{n=0}^{U} c_{mn} \epi{(m \frac{\t}{\Tmax}  - n\frac{\nu}{\Bmax})}}
\end{equation}
with coefficients $\c_{mn}$ chosen at random, i.i.d. Gaussian.
A particular instance used is shown on \autoref{fig:C3-birdseye}, with $U = 30$, $V=5$.
\end{enumerate}

In all test cases, the area of the support cannot be substantially reduced by translating or adjusting the margins for tighter fit.

In every case, we are going to use the second order statistics of the discretized output $\{\y_{\ell}\}_{\ell \in V_\eps \Ueps}$ defined in \eqref{eq:discrete.output}
above to reconstruct the discrete scattering function $C_{k,\ell}$, which characterizes the second-order statistics of a WSSUS channel $\H''$.

For each of the cases $s = 1, 2,3$, we discretize the continuous-time scattering functions $C_s(\t,\nu)$ on a mesh $\Delta$, that is,
\begin{equation}\label{eq:scattering.grid}
(C_s)_{k,\ell} \coloneqq C_s(k \Delta \t, \ell \Delta \nu).
\end{equation}

We fix a sounding signal $x(t)$ to be a periodic impulse train as in \eqref{eq:tf-localized-signal}, with complex weights $c$ drawn either randomly
and uniformly from the unit circle prior to any sounding or set $c$ to be the fiducial vector of the appropriate dimension.  
For illustration, on \autoref{fig:x} we show two periods of $\x(t)$ ($t \in [0, 2]$), together with an instance of the output $\y(t) = H\, \x(t)$.
\begin{figure}[H]
\centering
% \includestandalone{tikz-comb}
\includegraphics{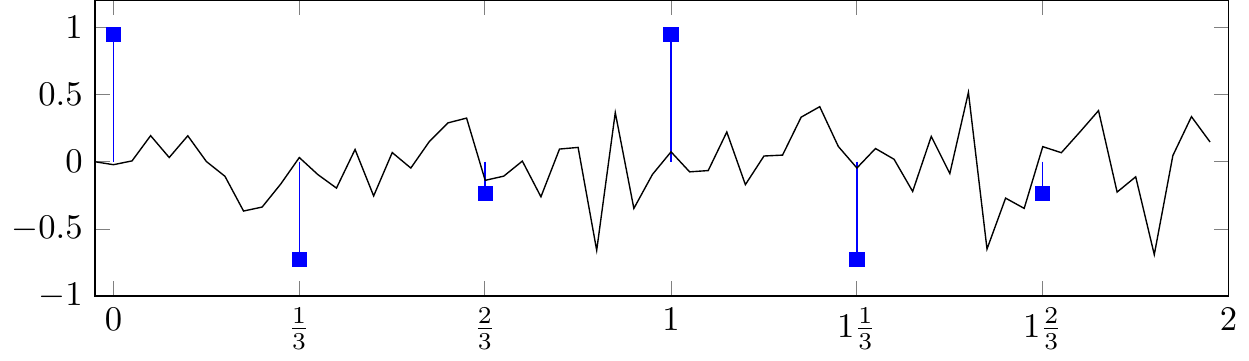}
\caption{Real parts of the input (the weighted impulse train $\x(t)$) and output signals $\y(t) = \H x(t)$ \label{fig:x}}
\end{figure}

We draw instances of the spreading function at random,
\[
\steta^{(j)}_{s; k,\ell} \sim \CNormal{0, (C_s)_{k,\ell}}, \quad  j=1,\dotsc,J,
\]
pointwise independent complex Gaussian, with mean zero and variances $C_{k, \ell}$ according to \eqref{eq:scattering.grid}; each instance representing a state of the channel.

For each sounding, we compute the output vector $y^{(j)}(t) = H^{(j)}\, \x(t)$ according to \eqref{eq:discrete.output} and subsequently the
discretized continuous-time Zak transform $\Zak y^{(j)}(\t,\nu)$ discretized on a mesh $\Delta$.\footnote{Notably, this is not the same as the discrete Zak transform.}
From the average \acorr of $\Zak y^{(j)}(\t,\nu)$ obtained by \eqref{eq:Dhat}, we estimate $\widehat{C}_s$ using the discretized version of \eqref{eq:C.Rz}.
For the spreading function $C_1(\t,\nu)$, we may choose the three active boxes highlighted in white, namely, $(1,1), (2,3), (3,3)$ as on \autoref{fig:C1-heatmap}, or attempt to use all nine boxes, as shown on \autoref{fig:C1-recon9-100}.
As discussed in \autoref{rem:Kfull-or-not-Kfull}, restricting attention to fewer boxes---even though our algorithm allows full recovery---is beneficial when the locations of the boxes are known, and the error induced by ignoring the almost empty boxes is negligible.
The effect of selecting three boxes rather than nine is shown on \autoref{fig:C1}.

\begin{figure}[H]
\centering
\includegraphics[width=0.5\columnwidth]{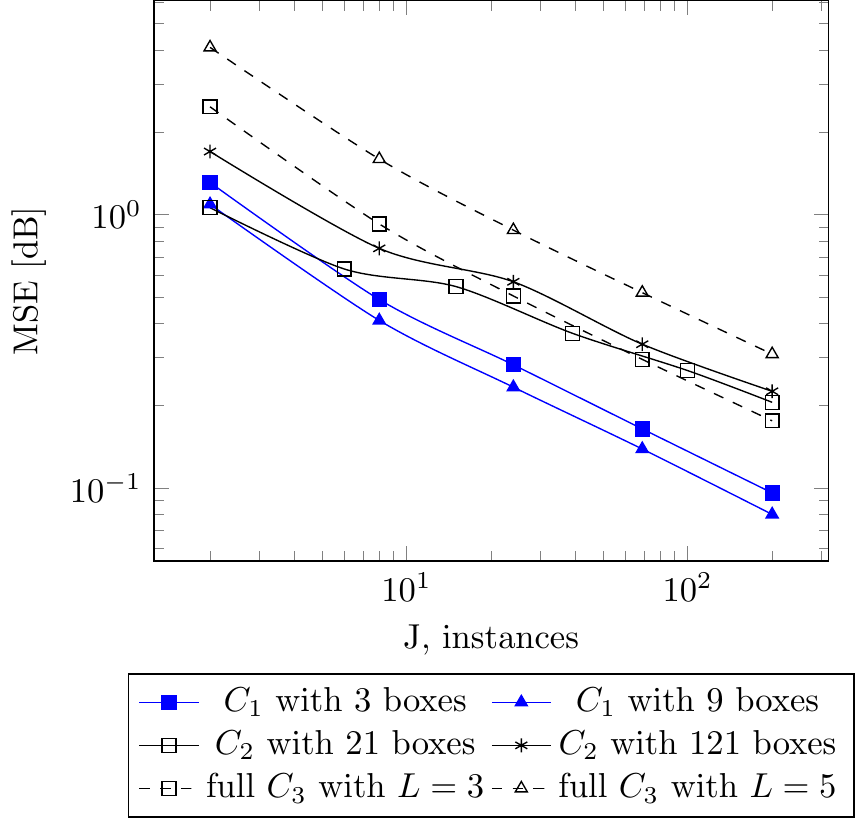}
% \includestandalone{tikz-relative-mse}
\caption{Relative MSE $\frac{\norm{\hat{C}-C}_2}{\norm{C}_2}$ as a function of $J$. \label{fig:MSE.two}}
\end{figure}

The mean-square error of reconstruction for the scattering functions given by $C_2(\t,\nu)$ and $C_3(\t,\nu)$ are also given on \autoref{fig:MSE.two}.
In our tests, the estimator behaved as expected, with error decaying linearly in the number of soundings $J$, as predicted by \autoref{eq:var.bound}.
For simulated cases $C_1(\t,\nu), C_2(\t,\nu), C_3(\t,\nu)$, the normalized mean square error of the estimator is shown on \autoref{fig:MSE.two}.

\section{Conclusion}\label{sec:conclusion}
We show that it is possible to recover the scattering function of a target using the second order statistics of the returned echoes from repeated sounding by a custom weighted delta train, as long as its scattering function has bounded support.
This includes overspread targets, under both the classic criterion that the area of the bounding box is less than one, and the sharper condition that the area of the support itself  is less than one.

We exhibit the universal deterministic choice for a family of weights that provide identifying pulse trains for any support size and level of resolution, given by so called fiducial vectors.
This provides a nice link to the theory of equiangular tight frames and gives fiducial vectors another useful application.
The excellent numerical properties of fiducial vectors allow a more flexible indiscriminate sounding regime, removing the need for a probabilistic selection of the weight vector in the deterministic case, a nuisance even in cases when a probabilistic choice is known to succeed with overwhelming probability.

We give an explicit recipe of a simple unbiased estimator, with predictable behavior of its variance given repeated soundings under the assumption of pointwise jointly proper Gaussianity of the scatterers.
We show that the incurred error decays inversely proportionate to the number of soundings.
We give extensive simulation results, confirming our theoretical claims empirically.
We explore some numerical aspects of it, including the effect of geometry-conscious identification, using probabilistic vs.\ fiducial weight vectors, and the contamination from neglected approximately zero areas.

%%%%%%%%%%%%%%%%%%%%%%%%%%%%%%%%%%%%%%%%%%%%%%%%%%%%%%%%%%%% 
\begin{figure*}[t]%
\centering
\subfloat[$\abs{C_1^{(1)} - C_1}$: error after 1 sounding, using 3 boxes]{  
\includegraphics[width=0.45\columnwidth]{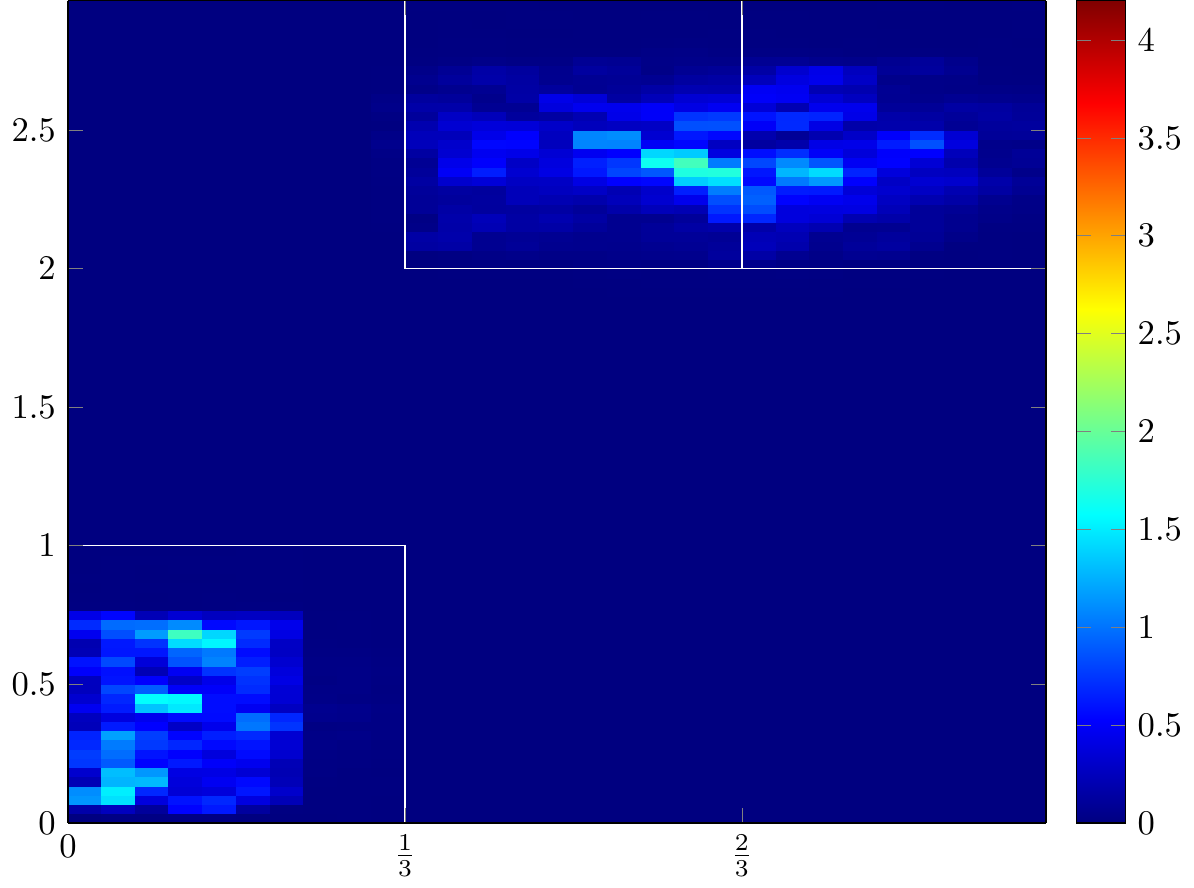}
% \includestandalone{tikz-C1-diff-heatmap}
\label{fig:C1-recon1}
}%
\hfill
\subfloat[$\abs{C_1^{(200)} - C_1}$: error after 200 sounding, using 3 boxes]{
% \includestandalone{tikz-C1-200-heatmap}
\includegraphics[width=0.45\columnwidth]{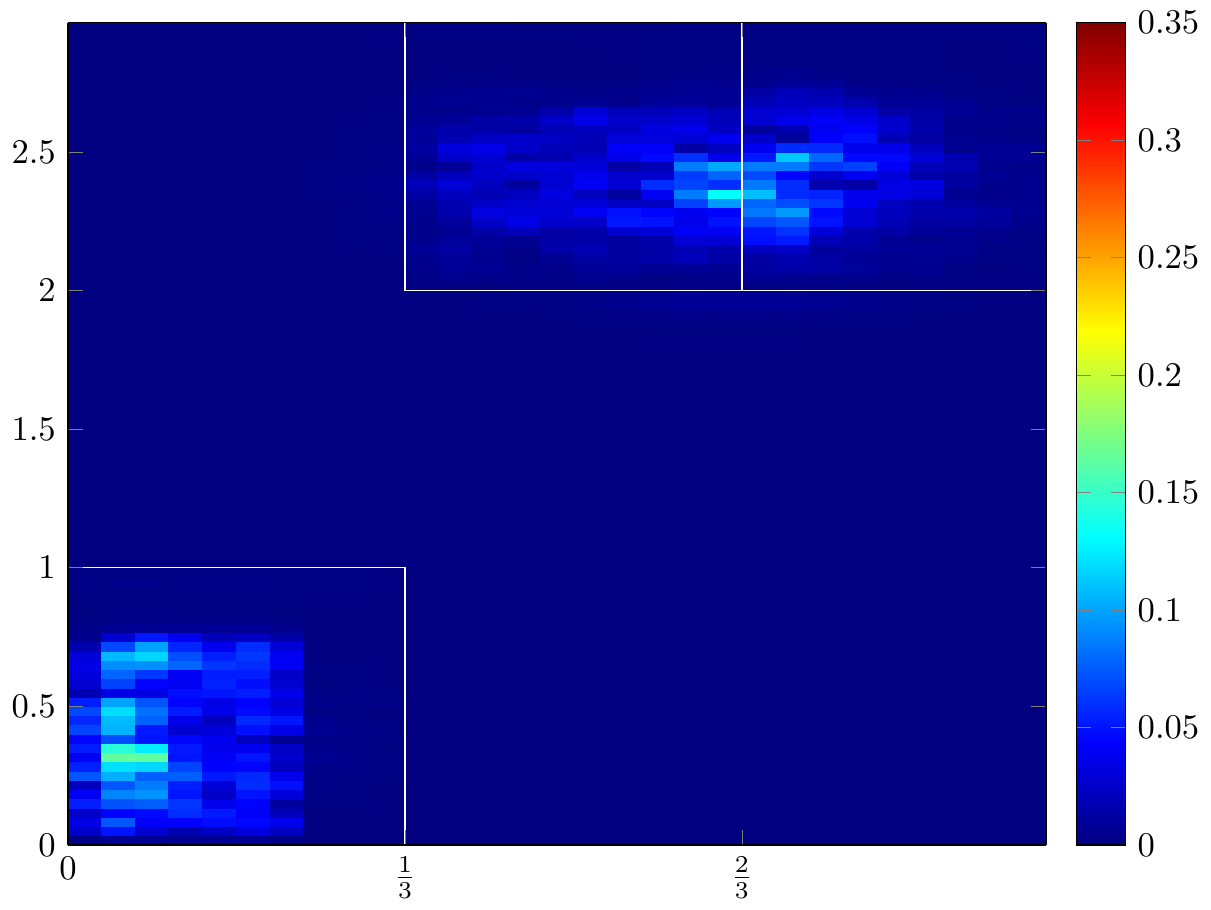}
\label{fig:C1-recon200}
}%
% \end{figure*}
% \begin{figure*}[t]
% \centering
\\
\subfloat[$\abs{C_1^{(1)} - C_1}$: error after 1 sounding, using 9 boxes]{
\includegraphics[width=0.45\columnwidth]{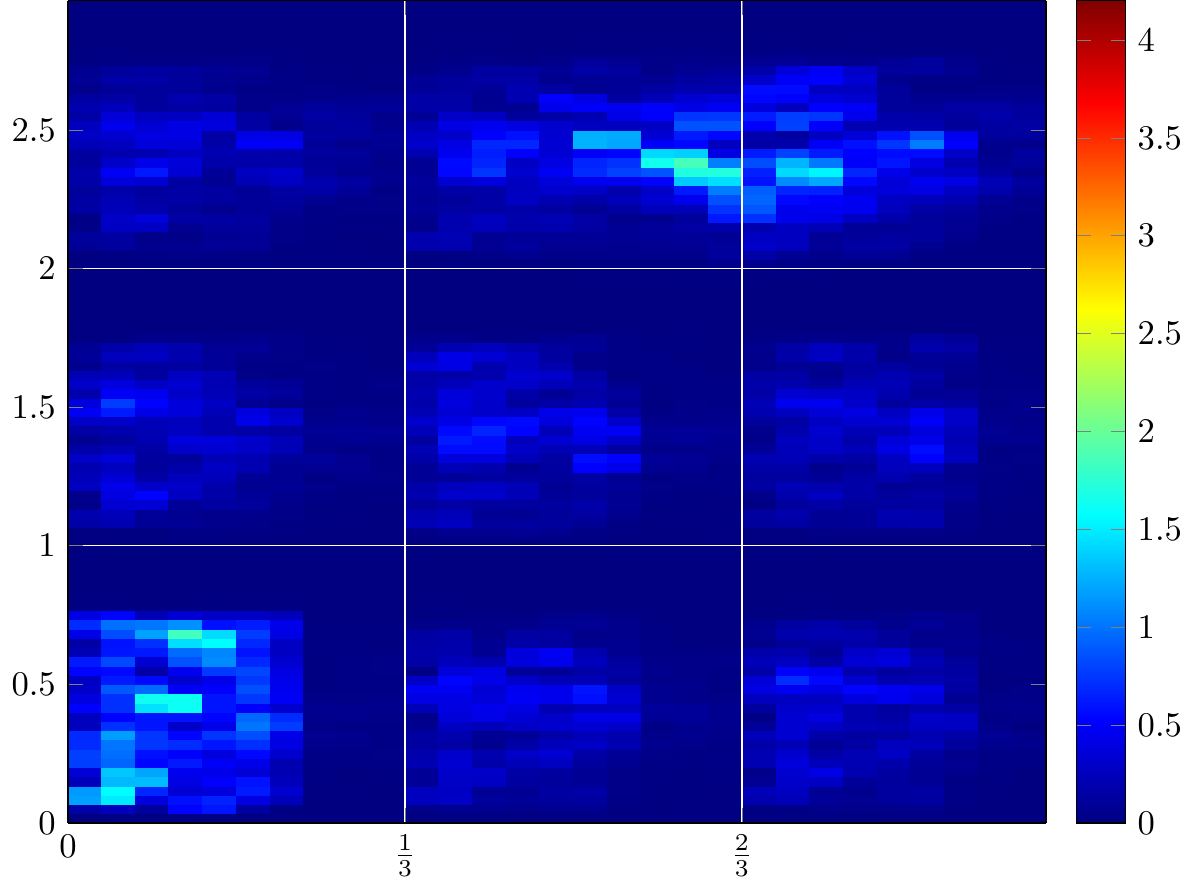}
% \includestandalone{tikz-C1-diff-1instance-heatmap}
\label{fig:C1-recon9-1}
}
\hfill  
\subfloat[$\abs{C_1^{(200)} - C_1}$: error after 200 sounding, using 9 boxes]{
\includegraphics[width=0.45\columnwidth]{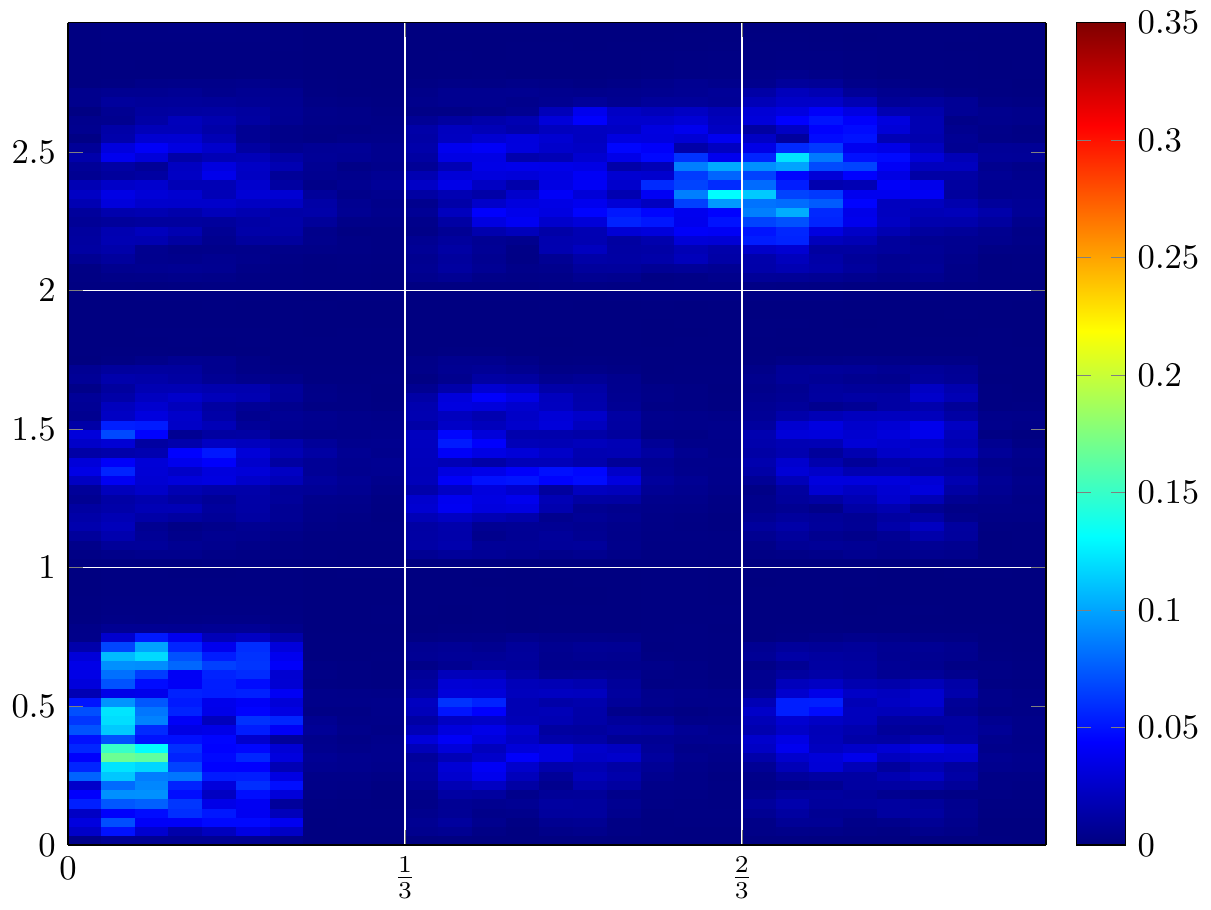}
% \includestandalone{tikz-C1-diff-200instances}
\label{fig:C1-recon9-100}
}
\caption{$C_1(\t,\nu)$, reconstruction under different scenarios} 
\label{fig:C1}%
\end{figure*}

\bibliographystyle{hieeetr}
% \bibliography{ScatteringFunction}

\begin{thebibliography}{10}

\bibitem{Kailath}
T.~Kailath, ``Measurements on time-variant communication channels,'' {\em
  Information Theory, IRE Transactions on}, vol.~8, pp.~229 --236, september
  1962.

\bibitem{BelloMeas}
P.~Bello, ``Measurement of random time-variant linear channels,'' {\em
  Information Theory, IEEE Transactions on}, vol.~15, pp.~469 -- 475, jul 1969.

\bibitem{PfWal}
G.~E. Pfander and D.~F. Walnut, ``Measurement of time-variant linear
  channels,'' {\em Information Theory, IEEE Transactions on}, vol.~52, pp.~4808
  --4820, nov. 2006.

\bibitem{KozPf}
W.~Kozek and G.~E. Pfander, ``Identification of operators with bandlimited
  symbols,'' {\em SIAM J. Math. Anal.}, vol.~37, no.~3, pp.~867--888, 2005.

\bibitem{Pfander1}
G.~E. Pfander, ``Measurement of time-varying multiple-input multiple-output
  channels,'' {\em Appl. Comput. Harmon. Anal.}, vol.~24, no.~3, pp.~4393 --
  401, 2008.

\bibitem{PfaZh02}
G.~Pfander and P.~Zheltov, ``Sampling of stochastic operators,'' {\em
  Information Theory, IEEE Transactions on}, vol.~60, pp.~2359--2372, April
  2014.

\bibitem{PfaZh03}
G.~E. Pfander and P.~Zheltov, ``Identification of stochastic operators,'' {\em
  Applied and Computational Harmonic Analysis}, vol.~36, no.~2, pp.~256--279,
  2014.

\bibitem{BelloChar}
P.~Bello, ``Characterization of randomly time-variant linear channels,'' {\em
  Communications Systems, IEEE Transactions on}, vol.~11, pp.~360 --393,
  december 1963.

\bibitem{VanTrees}
H.~Van~Trees, {\em Detection, Estimation, and Modulation Theory, vol.3}.
\newblock Wiley, New York, 2001.

\bibitem{Kay}
S.~Kay and S.~Doyle, ``Rapid estimation of the range-{D}oppler scattering
  function,'' {\em Proc. MTS/IEEE Conference and Exhibition OCEANS}, vol.~1,
  pp.~34--39, 2001.

\bibitem{Green}
P.~Green, {\em Radar measurements of target scattering properties}.
\newblock New York, NY: McGraw-Hill, 1968.

\bibitem{Harmon}
J.~Harmon, ``Planetary delay-{D}oppler radar and the long-code method,'' {\em
  Geoscience and Remote Sensing, IEEE Transactions on}, vol.~40, pp.~1904 --
  1916, sep 2002.

\bibitem{Gaarder}
N.~Gaarder, ``Scattering function estimation,'' {\em Information Theory, IEEE
  Transactions on}, vol.~14, pp.~684 -- 693, sep 1968.

\bibitem{Papoulis}
A.~Papoulis, {\em Probability, Random Variables, and Stochastic Processes, 3rd
  ed.}
\newblock New York: McGraw-Hill, 1991.

\bibitem{KraPfa}
F.~Krahmer and G.~Pfander, ``Local sampling and approximation of operators with
  bandlimited {K}ohn-{N}irenberg symbols.'' 2012.

\bibitem{VanLoan}
C.~F. Van~Loan, ``The ubiquitous {K}ronecker product,'' {\em J. Comput. Appl.
  Math.}, vol.~123, no.~1-2, pp.~85--100, 2000.
\newblock Numerical analysis 2000, Vol. III. Linear algebra.

\bibitem{LPW}
J.~Lawrence, G.~E. Pfander, and D.~Walnut, ``Linear independence of {G}abor
  systems in finite dimensional vector spaces,'' {\em J. Fourier Anal. Appl.},
  vol.~11, no.~6, pp.~715--726, 2005.

\bibitem{Mal13}
R.-D. Malikiosis, ``A note on {G}abor frames in finite dimensions.'' {
  http://arxiv.org/abs/1304.7709}, {p}reprint, 2013.

\bibitem{blanchfield2013orbits}
K.~Blanchfield, ``Orbits of mutually unbiased bases,'' {\em arXiv preprint
  arXiv:1310.4684}, 2013.

\bibitem{appleby2007symmetric}
D.~Appleby, H.~B. Dang, and C.~A. Fuchs, ``Symmetric informationally-complete
  quantum states as analogues to orthonormal bases and minimum-uncertainty
  states,'' {\em arXiv preprint arXiv:0707.2071}, 2007.

\bibitem{appleby2005symmetric}
D.~Appleby, ``Symmetric informationally complete--positive operator valued
  measures and the extended clifford group,'' {\em Journal of Mathematical
  Physics}, vol.~46, p.~052107, 2005.

\bibitem{Ras13}
A.-E. Rastegin, ``Notes on general {SIC-POVM}s.'' {
  http://arxiv.org/abs/1307.2334}, {p}reprint, 2013.

\bibitem{Waldron12}
S.~Waldron, ``Group frames,'' in {\em Finite Frames} (P.~G. Casazza and
  G.~Kutyniok, eds.), Applied and Numerical Harmonic Analysis, pp.~171--191,
  Birkh\"{a}user Boston, 2013.

\bibitem{OPZ}
O.~Oktay, G.~E. Pfander, and P.~Zheltov, ``Reconstruction and estimation of
  scattering functions of overspread radar targets,'' vol.~abs/1106.5346, 2013.
\newblock Accepted to IET Trans. Sig. Proc.

\bibitem{Boashash}
B.~Boashash, {\em Time Frequency Analysis: A Comprehensive Reference}.
\newblock Elsevier Science Limited, 2003.

\end{thebibliography}

\end{document}